\DeclareMathAlphabet{\mathpzc}{OT1}{pzc}{m}{it}
 \definecolor{crimson}{rgb}{0.86, 0.08, 0.24}
\newcommand{\Alain}[1]{{\color{red} AD:}}
\newtheorem{theorem}{Theorem}
\crefname{theorem}{theorem}{Theorems}
\Crefname{Theorem}{Theorem}{Theorems}
\newtheorem{lemma}{Lemma}
\crefname{lemma}{lemma}{lemmas}
\Crefname{Lemma}{Lemma}{Lemmas}
\crefname{corollary}{corollary}{corollaries}
\Crefname{Corollary}{Corollary}{Corollaries}
\crefname{proposition}{proposition}{propositions}
\Crefname{Proposition}{Proposition}{Propositions}
\crefname{remark}{remark}{remarks}
\Crefname{Remark}{Remark}{Remarks}
\newtheorem{example}[theorem]{Example}
\crefname{example}{example}{examples}
\Crefname{Example}{Example}{Examples}
\crefname{figure}{figure}{figures}
\Crefname{Figure}{Figure}{Figures}
\newtheorem{assumption}{\textbf{H}\hspace{-3pt}}
\Crefname{assumption}{\textbf{H}\hspace{-3pt}}{\textbf{H}\hspace{-3pt}}
\crefname{assumption}{\textbf{H}}{\textbf{H}}
\Crefname{assumptionAO}{\textbf{AO}\hspace{-3pt}}{\textbf{AO}\hspace{-3pt}}
\crefname{assumptionAO}{\textbf{AO}}{\textbf{AO}}
\Crefname{assumptionL}{\textbf{L}\hspace{-3pt}}{\textbf{L}\hspace{-3pt}}
\crefname{assumptionL}{\textbf{L}}{\textbf{L}}
\crefname{proposition}{proposition}{propositions}
\Crefname{Proposition}{Proposition}{Propositions}
\crefname{theorem}{theorem}{Theorems}
\Crefname{Theorem}{Theorem}{Theorems}
\crefname{definition}{definition}{definitions}
\Crefname{Definition}{Definition}{Definitions}
\crefname{definition}{definition}{definitions}
\Crefname{Definition}{Definition}{Definitions}
\let\originalleft\left
\let\originalright\right
\renewcommand{\left}{\mathopen{}\mathclose\bgroup\originalleft}
\renewcommand{\right}{\aftergroup\egroup\originalright}
\renewcommand{\epsilon}{\varepsilon}
\DeclareMathOperator*{\argmax}{\arg\!\max}
\newcommand{\cA}{\mathcal{A}}
\newcommand{\cE}{\mathcal{E}}
\newcommand{\cG}{\mathcal{G}}
\newcommand{\cH}{\mathcal{H}}
\newcommand{\cO}{\mathcal{O}}
\newcommand{\R}{\mathbb{R}}
\newcommand{\N}{\mathbb{N}}
\newcommand{\1}{\text{\usefont{U}{bbold}{m}{n}1}}
\newcommand{\E}{\mathbb{E}}
\renewcommand{\P}{\mathbb{P}}
\newcommand{\regret}{\mathfrak{R}}
\def \termA {\mathbf{(A)}}
\def \termB {\mathbf{(B)}}
\definecolor{Bleu}{RGB}{0,0,204}
\definecolor{Violet}{RGB}{102,0,204}
\definecolor{Rouge}{RGB}{204,0,0}
\definecolor{Highlight}{RGB}{251,0,0}
\definecolor{Gray}{gray}{0.85}
\definecolor{LightGray}{gray}{0.9}
\def \rmd{{\rm d}}
\newcommand{\agent}{\mathrm{up}}
\newcommand{\optimal}{\mathrm{opt}}
\newcommand{\principal}{\mathrm{down}}
\newcommand{\icvmid}{\icv^\mathrm{mid}}
\def\msa{\mathsf{A}}
\def\msb{\mathsf{B}}
\def\rset{\mathbb{R}}
\def\rmd{\mathrm{d}}
\def\mrcc{\mathrm{c}}
\newcommandx{\functionspace}[2][1=+]{\mathbb{F}_{#1}(#2)}
\newcommandx{\VarDeux}[3][3=]{\operatorname{Var}^{#3}_{#1}\left\{#2 \right\}}
\newcommand{\2}[1]{\mathbbm{1}_{\{#1\}}}
\newcommand{\LeftEqNo}{\let\veqno\@@leqno}
\newcommandx{\Vnorm}[2][1=V]{\| #2 \|_{#1}}
\newcommandx{\norm}[2][1=]{\ifthenelse{\equal{#1}{}}{\left\Vert #2 \right\Vert}{\left\Vert #2 \right\Vert^{#1}}}
\newcommandx{\normLigne}[2][1=]{\ifthenelse{\equal{#1}{}}{\Vert #2 \Vert}{\Vert #2\Vert^{#1}}}
\newcommand{\parenthese}[1]{\left(#1 \right)}
\newcommand{\parentheseDeux}[1]{\left[ #1 \right]}
\newcommand{\defEns}[1]{\left\lbrace #1 \right\rbrace }
\newcommandx\probaMarkovTilde[2][2=]
\def\eqsp{\;}
\newcommand{\indi}[1]{\1_{#1}}
\newcommandx\sequence[3][2=,3=]
\newcommandx\sequenceD[3][2=,3=]
\newcommandx{\sequencen}[2][2=n\in\N]{\ensuremath{\{ #1_n, \eqsp #2 \}}}
\newcommandx\sequenceDouble[4][3=,4=]
\newcommandx{\sequencenDouble}[3][3=n\in\N]{\ensuremath{\{ (#1_{n},#2_{n}), \eqsp #3 \}}}
\def\iid{\text{i.i.d.}}
\newcommand{\opnorm}[1]{{\left\vert\kern-0.25ex\left\vert\kern-0.25ex\left\vert #1 
    \right\vert\kern-0.25ex\right\vert\kern-0.25ex\right\vert}}
\newcommandx{\osc}[2][1=]{\mathrm{osc}_{#1}(#2)}
\def\a{a}
\def\v{v}
\newcommand\coupling[2]{\Gamma(\mu,\nu)}
\renewcommand{\geq}{\geqslant}
\renewcommand{\leq}{\leqslant}
\def\a{\mathbf{a}}
\def\bfo{\mathbf{o}}
\newcommandx{\Voi}[1][1=i]{\mathfrak{V}_{\bfo,#1}}
\newcommandx{\Vlyapc}[2][1=\bfo,2=i]{\mathfrak{V}_{#1,#2}}
\def\Wlyap{\mathfrak{W}}
\def\bfomega{\boldsymbol{\omega}}
\newcommandx{\Woi}[1][1=i]{\Wlyap_{\bfomega,#1}}
\newcommandx{\Wlyapc}[2][1=\bfomega,2=i]{\Wlyap_{#1,#2}}
\renewcommand{\iint}[2]{\{#1,\ldots,#2\}}
\def\a{a}
\newcommand{\mathand}{\quad\text{and}\quad}
\def\argmax{\operatorname{argmax}}
\newcommand{\Texp}{\tilde{T}}
\newcommand{\Tnot}{T^{\ne}}
\newcommand{\aib}{\bar{a}}
\newcommand{\ai}{\tilde{a}}
\newcommand{\aag}{A}
\newcommand{\ap}{B}
\newcommand{\aupmax}{a^\mathrm{u}_\star}
\newcommand{\anomic}{\mathrm{n}}
\newcommand{\property}{\mathrm{p}}
\newcommand{\anagregret}{\regret_\anomic^{\agent}}
\newcommand{\anprregret}{\regret_\anomic^{\principal}}
\newcommand{\pragregret}{\regret_\property^{\agent}}
\newcommand{\epragregret}{\tilde{\regret}_\property^{\agent}}
\newcommand{\prprregret}
{\regret_\property^{\principal}}
\newcommand{\eprprregret}{\tilde{\regret}_\property^{\principal}}
\newcommand{\stepexp}{\Lambda_{K+1}}
\newcommand{\hist}{\cH}
\newcommand{\anaghist}{\hist^{\agent,\anomic}}
\newcommand{\anprhist}{\hist^{\principal,\anomic}}
\newcommand{\praghist}{\hist^{\agent,\property}}
\newcommand{\prprhist}{\hist^{\principal,\property}}
\newcommand{\acst}{\mathrm{C}}
\newcommand{\aalgano}{\Pi_{\anomic}^{\agent}}
\newcommand{\aalg}{\Pi_{\property}^{\agent}}
\newcommand{\delag}{\Delta^\agent}
\newcommand{\delap}{\Delta^\sw}
\newcommand{\bandalg}{\texttt{Bandit-Alg}}
\newcommand{\sw}{\mathrm{sw}}
\newcommand{\icv}{\tau}
\newcommand{\icvstar}{\icv^\star}
\newcommand{\icvstare}{\icv^{\star,\epsilon}}
\newcommand{\licv}{\underline{\icv}}
\newcommand{\hicv}{\hat{\icv}}
\newcommand{\uicv}{\overline{\icv}}
\newcommand{\interv}{\mathrm{I}}
\newcommand{\intervbad}{\mathrm{J}}
\newcommand{\evgood}{\cE}
\newcommand{\evbad}{\cE^{\mathrm{c}}}
\newcommand{\Alg}{\Pi^{\principal}_{\property}}
\newcommand{\Algano}{\Pi^{\principal}_{\anomic}}
\newcommand{\bsrounds}{N_T}
\newcommand{\alghist}{\tilde{\cH}^{\principal, \property}}
\def\algU{\mathtt{BELGIC}}
\title{Learning to Mitigate Externalities: the Coase Theorem with Hindsight Rationality}
\author{%
  Antoine Scheid\textsuperscript{1}
    \\
  \And
  Aymeric Capitaine\textsuperscript{1} \\
  \AND
  Etienne Boursier\textsuperscript{2} \\
  \And
  Eric Moulines\textsuperscript{1} \\
  \And
  Michael I. Jordan\textsuperscript{3,4} \\
  \And
  Alain Durmus\textsuperscript{1}
}
\begin{document}

\maketitle
{\centering
\noindent\textsuperscript{1} Centre de Mathématiques Appliquées – CNRS – École polytechnique – Palaiseau, 91120, France \\
  \noindent\textsuperscript{2} INRIA Saclay, Universit\'e Paris Saclay, LMO - Orsay, 91400, France\\
  \noindent\textsuperscript{3} University of California, Berkeley \\
\noindent\textsuperscript{4} Inria, Ecole Normale Sup\'erieure, PSL Research University - Paris, 75, France \newline
\newline
}

\begin{abstract}
In economic theory, the concept of externality refers to any indirect effect resulting from an interaction between players that affects the social welfare.
Most of the models within which externality has been studied assume that agents have perfect knowledge of their environment and preferences. This is a major hindrance to the practical implementation of many proposed solutions. To address this issue, we consider a two-player bandit setting where the actions of one of the players affect the other player and we extend the Coase theorem \citep{coase60}. This result shows that the optimal approach for maximizing the social welfare in the presence of externality is to establish property rights, i.e., enable transfers and bargaining between the players. Our work removes the classical assumption that bargainers possess perfect knowledge of the underlying game. We first demonstrate that in the absence of property rights, the social welfare breaks down. We then design a policy for the players which allows them to learn a bargaining strategy which maximizes the total welfare, recovering the Coase theorem under uncertainty.
\end{abstract}

\section{Introduction}

The concept of \emph{externality} is used in economics to capture phenomena that impact the global welfare stemming from economic interactions without any compensation~\citep{buchanan2006externality,shah2018bandit}. Externality is generally considered as market failure since they result in a loss of collective welfare. Given its practical importance \citep{dahlman1979problem, greenfield2009externalities}, mechanisms that characterize and mitigate externalities are central to modern economic thinking.

A first approach to tackle the adverse effects of externalities in modern economics was based on quotas and taxation \citep[see, e.g.,][and the references therein]{pigou2017economics}. However, Coase's theorem \citep{coase60} shows that in the presence of well-defined property rights and low transaction costs, parties affected by externalities can privately negotiate efficient solutions, and recover a welfare efficient allocation through transfers and bargaining.

Throughout the paper, we use the following simple example to illustrate our results.

\begin{example}[label=exa:cont]
Consider two firms 1 (upstream) and 2 (downstream) respectively producing quantities $q_1 \geq 0$ and $q_2\geq 0$ of a good sold at a fixed price $p>0$. They incur strictly increasing and convex costs, captured by the  differentiable cost functions $c_1 : q_1 \mapsto c_1 (q_1)$ and $c_2: q_2 \mapsto c_2 (q_2)$, satisfying $c_1 (0)=0$ and $c_2 (0) = 0$. We assume that firm $1$ exerts on firm 2 a constant externality $\alpha >0$ per unit produced. In other words, their profit functions (or utilities) are given by
$$
\pi_1 (q_1)=pq_1 - c_1 (q_1)\quad\text{and}\quad \pi_2 (q_1, q_2) = pq_2 - c_2(q_2)-\alpha q_1\eqsp.
$$
\end{example}
One simple concrete illustration consists in an upstream firm emitting pollutants that reduce the downstream firm's production. If the upstream firm owns the property rights, it may receive a payment from the downstream one to reduce its output and thereby pollution. On the other hand, if the downstream firm owns the property right, it may require compensation from the upstream firm to allow its operation.

The Coase theorem demonstrates that in both cases with appropriate property rights, the resulting levels of production would be welfare efficient. The theorem is typically explained in textbooks under the assumption that the players have perfect knowledge of their own utility or profit function, as well as that of others. However, this assumption is unlikely to hold in real-world scenarios, where players have to learn about their own preferences and those of their competitors.

This example is, of course, a simplification of real world scenarios. For example, \citet[][]{abildtrup2012does} consider a more complex setting to model the interaction between farmers and waterworks in Denmark where the farmers have the property rights whereas the waterworks can pay to reduce pollution. In particular, they demonstrate the failure of the theorem, attributing it to the breakdown of the main assumptions: no transaction costs, maximizing behaviors and perfect information, with an important focus on strategic behaviors and the asymmetry of information. We restore the latter in our work and provide foundations for the theorem to hold in more realistic scenarios.

\begin{center}
    \textit{A key question is whether the Coase theorem holds when players learn their preferences over time.}
\end{center}

We investigate this question within the framework of a multi-armed bandit learning. We build upon recent works that extend the classical bandit setting to economics in which there are two players interacting via principal-agent protocols~\citep{dogan2023repeated,dogan2023estimating,scheid2024incentivized}. This allows us to capture, for example, a version of the two-firm problem where firms are uncertain regarding both their profit functions and the degree of externality on other firms. We represent production decisions in this problem as arms which can be played by the firms at any round over time, with the goal of finding decisions that maximize their rewards. More precisely, we assume that the reward of the upstream firm only depends on its own action, while the reward of the downstream firm depends on both its action and the upstream firm's action. This dependency on both actions allows us to capture externalities. 

The property rights in \Cref{exa:cont}, or more generally over a bandit instance, amount to giving a firm the possibility of engaging in monetary transfers that influence the arms that are played over time. The owner of the bandit instance will face a problem of bandit learning with transfers with an upstream player who is also learning his preferences.

To account for the efficiency of a policy in this setup, we extend the classical static notion of welfare efficiency to the online setting. We say that a policy is \textit{Welfare efficient} if the social welfare regret is sub-linear. Proving the Coase theorem within our setup therefore boils down to show that if the bandit owner (who is without loss of generality the upstream player in this study\footnote{The fact that efficiency is restored whoever is given the property rights is known as the \textit{invariance} property of the Coase theorem.}) runs a no-regret bandit algorithm to learn and exploit his preferences, the downstream player can then choose an optimal transfer scheme leading to a sub-linear total regret.

Our contributions are as follows:
\begin{itemize}
    \item We show that when an upstream agent exerts externality on a downstream agent, in the absence of property right, the social welfare breaks down. Put differently, no joint policy of the agents can be welfare efficient.
    \item We then introduce property rights and show how it affects the game. In this case, bargaining and transfers are available to the players. We propose a policy for the downstream player that leads to welfare efficiency when the upstream player follows any black-box no-regret policy under mild assumption. This solution addresses the breakdown issue at equilibrium. Put together, we show an online version of the \textit{Coase theorem}.
\end{itemize}
\section{Setup and Inefficiency of Externality}

\subsection{Bandit game} We consider a sequential bandit game in which two players (downstream and upstream) simultaneously play actions in a bandit instance for a horizon $T \in \N^\star$. The action set for both players is $\cA= \{1, \ldots, K\}, K\in \N^\star$.

The reward distributions of the agents differ. Given a family of distributions $\{ \gamma_a \, : \, a \in \cA \}$ indexed by $\cA$, 
the upstream player's rewards are provided by an \iid\ family of random variables
\begin{equation*}
  \text{$\{(Z_a(t))_{t\in [T]} \, : \, a \in \cA\}$ } \eqsp, \quad \text{ 
  where $Z_a(t) \sim \gamma_a$} \quad \text{for any $t \in [T]$ and $a \in \cA$} \eqsp.
\end{equation*}
To model the externality exerted by the upstream on the downstream player, we assume that the latter has a reward that depends both on her action  and on that of the upstream player. Formally, this is modeled through a family of distributions $\{ \nu_{a,b} \, : \, a,b \in \cA \}$ double-indexed by $\cA$ and an  \iid\ family of random variables
\begin{equation*}
\text{$\{(X_{a,b}(t))_{t\in [T]} \, : \, a ,b\in \cA\} \eqsp \; , \;$ \quad where $X_{a,b}(t) \sim \nu_{a,b}$}\label{eq:5}
\end{equation*}
is the reward received by the downstream player at time $t$ if she pulls the arm $b$ and the upstream player pulls the arm $a$.

\textbf{Players.} We assume that players are risk-neutral expected-utility maximizers, and we define their expected utilities for any $(a,b)\in\cA\times\cA$ as
$$
v^\agent (a) = \int z \, \gamma_a (\rmd z) \in \R \quad\text{and}\quad v^\principal (a,b) = \int x \, \nu_{a,b}(\rmd x) \in \R \eqsp.
$$
The distributions $(\gamma_a)_{a\in\cA}$ and $(\nu_{a,b})_{(a,b)\in\cA^2}$ are unknown to both the downstream and the upstream players and they aim to learn the distributions with best mean rewards by sequentially observing samples from $\{(Z_a(t))_{t\in [T]} \, : \, a \in \cA\}$ and $\{(X_{a,b}(t))_{t\in [T]} \, : \, a ,b\in \cA\}$.

Moreover, we suppose that players are  rational in hindsight; that is, they minimize their regret. Formally, the upstream player aims to minimize his regret defined as
\begin{equation}\label{equation:agent_regret}
    \anagregret(T, \aalgano) = T \mu^{\star, \agent} - \E\parentheseDeux{\sum_{t =1}^T v^\agent (\aag_t)} \eqsp, \text{ where } \mu^{\star, \agent} = \max_{a \in \cA} v^\agent (a)\eqsp,
\end{equation}
while the downstream player seeks to minimize her external regret defined as
\begin{align}\label{equation:principal_regret}
        \anprregret(T, \aalgano, \Algano) = \E\parentheseDeux{ \sum_{t =1}^T \max_{b \in \cA} v^\principal (\aag_t, b) - v^\principal (\aag_t, \ap_t)} \eqsp,
\end{align}
where the players' actions $(\aag_t)_{t \in [T]}$, $(\ap_t)_{t \in [T]}$ as well as their policies $\aalgano$, $\Algano$ are defined below. Note that the utility of the downstream player also depends on the actions taken by the upstream player, which represents the externality exerted by the upstream player on the downstream player, hence the strategic dimension of our setting. We first consider a game where no property right is defined, so each player is free to pick his preferred arm irrespectively of the other player's choice. This will result in a breakdown of the total utility.

\textbf{Policies without property rights.} Consider first the upstream player. Based on a policy $\aalgano$ (for example a no-regret bandit algorithm such as the \texttt{Upper Confidence Bounds} algorithm (\texttt{UCB}) \citep{auer2002using} or the $\mathtt{\epsilon}$\texttt{-greedy} algorithm \citep{robbins1952some,langford2007epoch}), we define his history $(\anaghist_t)_{t \in [T]}$ by induction. We set $\anaghist_0 = \varnothing$ and supposing that $\anaghist_t$ is defined for $t \in [T]$, then
\begin{equation*}
\anaghist_{t+1} = \anaghist_t \cup \defEns{\aag_{t+1}, V_{t+1}, Z_{\aag_{t+1}}(t+1)} \eqsp,
\end{equation*}
where $(V_s)_{s\in \N^\star}$ is a family of independent uniform random variables in $[0,1]$, allowing for randomization in the policy, and $\aag_{t+1}$ is provided by $\aalgano$, following $\aalgano \colon (V_{t+1}, \anaghist_{t}) \mapsto \aag_{t+1}$.

Second, consider the downstream player and an algorithm $\Algano$ (specifically a no-regret bandit algorithm). We define her history $(\anprhist_t)_{t \in [T]}$ by induction. We set $\anprhist_0 = \varnothing$ and supposing that $\anprhist_t$ is defined for $t \in [T]$, then
\begin{equation*}
  \anprhist_{t+1} = \anprhist_t\cup\defEns{\aag_{t+1}, \ap_{t+1}, U_{t+1}X_{\aag_{t+1}, \ap_{t+1}}(t+1)}\eqsp,
\end{equation*}
where $(U_s)_{s\in \N^\star}$ is a family of independent uniform random variables in $[0,1]$ allowing for randomization in the policy and $\ap_{t+1}$ is provided by $\Algano$, following $\Algano \colon (U_{t+1}, \anprhist_{t}) \mapsto \ap_{t+1}$.

\textbf{Welfare efficiency.} We now introduce the notion of \textit{Welfare efficiency} for our setup. The global utility, or \textit{social welfare}, of the players at round $t$ is defined as $v^\agent(\aag_t)+v^\principal(\aag_t, \ap_t)$. We define the \textit{socially optimal action} $(a^\sw, b^\sw)\in\cA\times\cA$ of the game as 
\begin{equation}
\label{eqation:def_objective_social_welfare}
\begin{gathered}
        (a^\sw,b^\sw) \in \argmax_{a, b \in \cA} \,\, v^\agent (a) + v^\principal (a,b)\eqsp,
\end{gathered}
\end{equation}
as well as the \textit{global regret} (or \textit{social welfare regret}) associated with policies $\aalgano$ and $\Algano$ as
\begin{align}
    \regret^\sw(T, \aalgano, \Algano) & = T\, \parenthese{v^\agent (a^\sw)+v^\principal (a^\sw,b^\sw)} - \sum_{t=1}^T\E\parentheseDeux{v^\agent (\aag_t)+v^\principal (\aag_t,\ap_t)} \eqsp.
\end{align}

Then, the joint policies $\aalgano$ and $\Algano$ for the players are said to be \textit{Welfare efficient} if $$\lim_{T \to +\infty} \regret^\sw(T, \aalgano, \Algano) /T = 0 \eqsp.$$

Intuitively, this condition implies that the frequency of the socially optimal action $(a^\sw, b^\sw)$ tends to 1 as $T$ goes to infinity. In this sense, it mimics the usual, static Welfare efficiency criterion. As we will see, $(\aalgano, \Algano)$ is typically not \textit{Welfare efficient} when there is a disalignment in the game between the players' individual interests based on their rationality and the social welfare.

\subsection{Inefficiency without property rights}
We first present a result that captures the adverse consequence of externality on social welfare. The upstream player does not take into account the indirect cost incurred by the downstream player when he chooses his action. This drives the social welfare away from its optimal level. We illustrate this fact within our simple bilateral externality example.

\begin{example}[continues=exa:cont]
  We show that the competitive outcome, where each firm maximizes its profit independently, is not welfare efficient in the presence of externality.  Define the social welfare as the function
  \begin{equation}
    \label{eq:4}
    W:(q_1, q_2)\mapsto \pi_1 (q_1) + \pi_2 (q_1,q_2)= p(q_1 + q_2) - (c_1 (q_1) + c_2 (q_2)) - \alpha q_1 \eqsp.
  \end{equation}
  By definition, the  welfare efficient outcome $(q_1 ^\star , q_2 ^\star)\in\R_+ ^2$ satisfies $W(q_1 ^\star, q_2 ^\star)\geq W(q_1, q_2)$ for any $(q_1 , q_2 )\in\R_+ ^2$. Since $W$ is differentiable and strictly concave, $(q_1 ^\star, q_2 ^\star)$ is uniquely defined by the condition $\nabla W(q_1 ^\star, q_2 ^\star)=0$, that is\begin{equation}
    \label{eq:conditionparetoexample}
    c_1 ' (q_1) - \alpha = p\mathand c_2' (q_2) = p
\eqsp.
\end{equation}Note that at the welfare efficient optimum, firm 1 does not equalize marginal cost with marginal profit, but produces less to account for the negative effect of externality on firm 2. We now characterize the competitive outcome $(q_1 ', q_2 ') \in\R_+ ^2$. Since $\pi_1$ and $\pi_2$ are differentiable and strictly concave, $(q_1 ', q_2 ')$ satisfies
    \begin{equation}
     \label{eq:conditioncompetitiveexample}
        c_1 ' (q_1') = p \mathand c_2 ' (q_2') = p \eqsp.
    \end{equation}For the competitive outcome to be welfare efficient, we require, by \Cref{eq:conditionparetoexample}
    and \Cref{eq:conditioncompetitiveexample},$$
    c_1 ' (q_1') - \alpha = c_1 ' (q_1'),\quad\text{that is}\quad\alpha = 0\eqsp.
    $$
    This proves that whenever there are externalities, no competitive outcome is efficient.
\end{example}

We now show that in our model, when there is no property right and under mild assumptions, no achievable policy is welfare efficient whenever there is a misalignment between the players' interests and the social welfare. The upstream player's policy $\aalgano$ is said to be \textit{no-regret} if $\lim_{T \to +\infty} \anagregret(T, \aalgano)/T = 0$, where $\anagregret$ is defined in \eqref{equation:agent_regret}.

\begin{restatable}{theorem}{socialwelfareregretnoincentives}\label{theorem:social_welfare_regret_no_incentives}
  Suppose that $\argmax_{a \in \cA} v^\agent(a)$ is the singleton $\{\aupmax\}$ and that
  \begin{equation}
    \label{eq:condi_theo_2}
    v^\agent(a^\sw)+v^\principal(a^\sw,b^\sw) - v^\agent(\aupmax)+v^\principal(\aupmax, b) >0 \eqsp,
  \end{equation}
  for any $b \in \cA$. In the absence of property rights and when the upstream player runs any no-regret policy $\aalgano$, we have $\regret^\sw(T, \aalgano, \Algano) = \Omega(T)$. Therefore, $\regret^\sw(T, \aalgano, \Algano) =\Omega(T)$ and $(\aalgano, \Algano)$ is not welfare efficient.
\end{restatable}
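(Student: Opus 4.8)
The plan is to show that any no-regret upstream policy forces the upstream player to play his unique optimal arm $\aupmax$ on all but a vanishing fraction of the rounds, and that on each such round the realized social welfare is bounded strictly below the optimum by a fixed margin, \emph{regardless} of what the downstream player does. Summing this margin over the $T - o(T)$ rounds on which $\aupmax$ is played then yields a linear lower bound on $\swregret$.

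First I would convert the aggregate no-regret guarantee into a statement about pull counts. Writing $N_a(T) = \sum_{t=1}^T \1\{\aag_t = a\}$, the upstream regret decomposes as
$$\anagregret(T, \aalgano) = \sum_{a \in \cA}\left(v^\agent(\aupmax) - v^\agent(a)\right)\E[N_a(T)]\eqsp.$$
Since $\argmax_a v^\agent(a) = \{\aupmax\}$ and $\cA$ is finite, the gap $\Delta = \min_{a \neq \aupmax}\left(v^\agent(\aupmax) - v^\agent(a)\right)$ is strictly positive, whence $\Delta \sum_{a\neq\aupmax}\E[N_a(T)] \leq \anagregret(T, \aalgano) = o(T)$. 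Therefore $\sum_{a\neq\aupmax}\E[N_a(T)] = o(T)$ and, since the counts sum to $T$, $\E[N_{\aupmax}(T)] = T - o(T)$.

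Next, set $W^\sw = v^\agent(a^\sw) + v^\principal(a^\sw, b^\sw)$ and $\delta = \min_{b \in \cA}\left(W^\sw - v^\agent(\aupmax) - v^\principal(\aupmax, b)\right)$; the hypothesis \eqref{eq:condi_theo_2} together with the finiteness of $\cA$ guarantees $\delta > 0$. Because $W^\sw$ is the maximal per-round welfare, every summand of $\swregret$ is nonnegative, so I can discard all rounds with $\aag_t \neq \aupmax$ and bound
$$\swregret(T, \aalgano, \Algano) \geq \sum_{t=1}^T \E\left[\1\{\aag_t = \aupmax\}\left(W^\sw - v^\agent(\aupmax) - v^\principal(\aupmax, \ap_t)\right)\right] \geq \delta\,\E[N_{\aupmax}(T)]\eqsp,$$
where the inner inequality uses that on the event $\{\aag_t = \aupmax\}$ the welfare deficit is at least $\delta$ for \emph{every} value of $\ap_t$. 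Combining this with the first step gives $\swregret(T, \aalgano, \Algano) \geq \delta(T - o(T)) = \Omega(T)$, uniformly over every downstream policy $\Algano$.

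The main obstacle is the passage from the aggregate no-regret bound to the lower bound on $\E[N_{\aupmax}(T)]$: this is precisely where the singleton assumption on $\argmax_a v^\agent(a)$ is essential, since it is what produces the uniform positive gap $\Delta$ and thereby prevents the upstream player from spreading his plays over a cloud of near-optimal arms while still achieving sublinear regret. The uniformity over the arbitrary (and possibly adversarial) downstream policy is then handled cleanly by taking the worst case $\min_b$ in the definition of $\delta$, which \eqref{eq:condi_theo_2} ensures remains strictly positive.
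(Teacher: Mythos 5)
Your proposal is correct and follows essentially the same route as the paper's own proof (Theorem~\ref{theorem:social_welfare_regret_no_incentives_long}): your gaps $\Delta$ and $\delta$ are exactly the paper's $\delag$ and $\delap$, the no-regret bound is converted into $\E[N_{\aupmax}(T)] = T - o(T)$ via the same gap argument, and the welfare regret is lower-bounded by $\delta\,\E[N_{\aupmax}(T)]$ after discarding the (nonnegative) contributions of the other rounds. The only cosmetic difference is that you make the nonnegativity of the discarded per-round welfare deficits explicit, which the paper leaves implicit.
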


Condition \eqref{eq:condi_theo_2} in \Cref{theorem:social_welfare_regret_no_incentives} represents the unalignment between the upstream player's preference and the optimal choice from a social welfare point of view. Note that the upstream and downstream players can both have an $o(T)$ external regret, while the social welfare regret still grows linearly with~$T$ because of the unfavorable interactions between their policies.

\section{Online Property Game with Bargaining Players}

\subsection{Online Property Game}

We now consider the same repeated game in the form of a \textit{property game} where one of the players possesses the bandit instance (\textit{upstream player}). As in the original setup of \citet{coase60}, the other player (\textit{downstream player}) will provide the bandit owner with transfers to incentivize him to choose some specific action and influence the outcome of the game in her favor. 

We show in \Cref{appendix:symmetricgame} that our method applies similarly when property rights are given to the upstream player rather than the downstream player. Hence, there is no loss of generality in considering the aforementioned framework. In this sense, we recover the \textit{invariance property} of the Coasean bargaining \citep{mascolell95}. 

\begin{example}[continues=exa:cont]
    We now illustrate how Coasean bargaining re-instaures efficiency. Suppose without loss of generality that property rights are such that firm 2 can pay $\tau\in\R_+$ to firm 1 for it to operate at a level $\tilde{q}_1$. Profits become
    $$
\bar{\pi}_2:(q_1, q_2, \icv, \tilde{q}_1)\mapsto \pi_2 (q_1,q_2)-\2{q_1=\tilde{q}_1}\icv\quad\text{and}\quad \bar{\pi}_1:(q_1,\icv, \tilde{q}_1) \mapsto \pi(q_1) + \2{q_1=\tilde{q}_1}\icv.
    $$
    Consider the competitive outcome $(q_1 , q_2, \icv, \tilde{q}_1)\in\R_+^4$ which satisfies
\begin{gather*}
q_1=q_1(\tau,\tilde{q}_1) \in \arg\max_{q_1' \geq 0}\bar{\pi}_1 (q_1',\icv, \tilde{q}_1)\quad \text{and}\\
\bar{\pi}_2(q_1(\tau,\tilde{q}_1), q_2, \icv, \tilde{q}_1)=\max_{q_2', \icv',\tilde{q}_1'}\;\bar{\pi}_2(q_1(\tau',\tilde{q}'_1), q_2', \icv', \tilde{q}_1').    
\end{gather*}
The condition on $q_1$ accounts for the rationality of the firm $1$ and the fact that its choice depends on the payment $(\tau,\tilde{q}_1)$. Obviously, the optimal solution is reached for $\tilde{q}_1=q_1$ and $\tau=\max_{q'}\pi_1(q')-\pi_1(\tilde{q}_1)$. 
Plugging this back in the expression of $\bar{\pi}_2$ then yields
\begin{equation*}
  (q_1 , q_2)=\argmax_{q_1', q_2'}\pi_1 (q_1') + \pi_2 (q_2') =\argmax_{q_1', q_2'} W(q_1', q_2')\eqsp,
\end{equation*}
so the competitive outcome $(q_1, q_2)$ is welfare efficient.
\end{example}

The transfers at each step can be interpreted as a contract between two players \citep[see, e.g.,][for general contract theory]{bolton2004contract,salanie2005economics} and providing the right amount of incentives relates to adjusting a contract in an online setting \citep[see][for learning-based perspectives about contracts]{dutting2019simple, guruganesh2021contracts, zhu2022sample,fallah2023contract, guruganesh2024contracting, ananthakrishnan2024delegating}.

Similarly to \Cref{exa:cont}, we modify the players' policies to now account for the transfer $\icv(t)$ that the downstream player offers at round $t$ to the upstream player if he picks action $\ai_t$. The downstream player's policy at round $t$ does not only output an arm $\ap_t$ but now a triple $(\ai_t, \icv(t), \ap_t)$, where $\ap_t$ is the arm that she should play and $\ai_t$ is the arm on which a transfer $\icv(t)$ is offered to the upstream player. On the upstream player's side, the policy still outputs an arm $\aag_t$ to play but also takes as an input the incentive $(\ai_t, \icv(t))$. In addition, the instantaneous utility of the upstream player becomes $Z_{\aag_t}(t) + \indi{\ai_t}(\aag_t)\icv(t)$, whereas the downstream player receives $X_{\aag_t, \ap_t}(t) -\indi{\ai_t}(\aag_t)\icv(t)$.

\textbf{Policies with property rights.} Based on policies $\aalg$ for the upstream player and $\Alg$ for the downstream player, we define their histories $(\praghist_t)_{t \in [T]}$ and $(\prprhist_t)_{t \in [T]}$ by induction. We set $\praghist_0 = \varnothing$, $\prprhist_0 = \varnothing$ and supposing that $\praghist_t$, $\prprhist_t$ are defined for $t \in [T]$, then
\begin{equation*}
  \praghist_{t+1} = \praghist_t \cup\defEns{ \ai_{t+1}, \icv(t+1), \aag_{t+1}, V_{t+1}, Z_{\aag_{t+1}}(t+1) }
\end{equation*}
and
\begin{equation*}
  \prprhist_{t+1} = \prprhist_t \cup \defEns{ \ai_{t+1}, \icv(t+1), \aag_{t+1}, \ap_{t+1}, U_{t+1}, X_{\aag_{t+1}, \ap_{t+1}}(t+1)} \eqsp,
\end{equation*}
where $(V_s)_{s\in \N^\star}$, $(U_s)_{s\in \N^\star}$ are two families of independent uniform random variables in $[0,1]$ allowing for randomization in the policies, and the remaining quantities are given by $\Alg \colon (U_{t+1}, \prprhist_{t}) \mapsto (\ai_{t+1}, \icv(t+1), \ap_{t+1})$ and $\aalg \colon (\ai_{t+1}, \icv(t+1), V_{t+1}, \praghist_{t}) \mapsto \aag_{t+1}$.

\textbf{Players' goal.} Given a transfer $\icv$ from the downstream to the upstream player on arm $\ai$, actions $a$ and $b$ respectively are chosen by the upstream and the downstream player, the upstream player's expected utility reads $v^\agent(a) + \1_{\ai}(a) \icv$ while the downstream player's expected utility is $v^\principal(a,b) - \1_{\ai}(a) \icv$. This defines the upstream player's expected regret for a horizon $T$ as
\begin{equation}\label{equation:upstream_regret_property_}
    \pragregret(T, \aalg, \Alg) = \E\parentheseDeux{\sum_{t =1}^T  \max_{a \in \cA} \{v^\agent(a) + \1_{\ai_t}(a)\icv(t)\} - (v^\agent(\aag_t) + \1_{\ai_t}(\aag_t)\icv(t))} \eqsp.
\end{equation}
Based on the upstream player's utility, the downstream player aims on a single round at proposing an optimal transfer $\icv^\optimal$ on an arm $a^\optimal \in \cA$ as well as picking an arm $b^\optimal \in \cA$ which solves
\begin{equation}\label{eq:def_objective_princple_0}
\begin{aligned}
  &   \textstyle \text{maximize} \,\,  (a,b,\tau) \mapsto v^\principal(a,b) -\icv \\
   &  \text{such that } \icv \in \R_+,b \in \cA, a \in \argmax_{a' \in \cA}\left\{ v^\agent(a') + \indi{a}(a')\icv \right\} \eqsp.
    \end{aligned}
\end{equation}
Her regret for any horizon $T$ is defined as
\begin{equation}\label{equation:principal_regret_property_}
    \prprregret(T, \aalg, \Alg) = T \mu^{\star, \principal} - \E\parentheseDeux{\sum_{t =1}^T v^\principal(\aag_t, \ap_t) - \1_{\ai_t}(\aag_t)\icv(t)} \eqsp,
\end{equation}
where we define $\mu^{\star, \principal} = v^\principal(a^\optimal, b^\optimal) - \icv^\optimal$ as the optimal utility she can aim for. We can see that the downstream player's influence is exerted through her action choice $\ap_t$ as well as through transfers which enable her to influence the upstream player's actions. Hence, the notion of external regret is obsolete here. The game has now the form of a repeated \textit{Stackelberg game} \citep{von2010market}.

\begin{restatable}{lemma}{objectivemaximizingsocialwelfare}\label{lemma:objective_maximizing_social_welfare}
Recall that $\mu^{\star, \principal}$ is the downstream player's optimal reward as defined as a solution of \eqref{eq:def_objective_princple_0}. We have $\mu^{\star, \principal} = \max_{a,b \in \cA} \{v^\principal(a,b) + v^\agent(a) \}- \max_{a' \in \cA} \{v^\agent(a')\}$, as well as $(a^\optimal, b^\optimal) = (a^\sw, b^\sw)$ and $\mu^{\star, \agent} + \mu^{\star, \principal} = v^\agent(a^\sw)+v^\principal(a^\sw,b^\sw) = \max_{a, b \in \cA} \{v^\agent(a) + v^\principal(a,b)\}$, where $\mu^{\star, \agent}$ is defined in \Cref{equation:agent_regret}. Moreover, for any integer $T \in \N^\star$, and policies $\aalg$, $\Alg$, we have that
    \begin{equation*}
        \regret^\sw(T, \aalg, \Alg) \leq \pragregret(T, \aalg, \Alg) + \prprregret(T, \aalg, \Alg)  \eqsp.
    \end{equation*}
\end{restatable}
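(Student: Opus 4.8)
The plan is to solve the single-round optimization problem \eqref{eq:def_objective_princple_0} in closed form, which yields the first three identities, and then to establish the decomposition inequality through an exact cancellation of the transfer terms between the two regrets. First I would reduce the best-response constraint: for a fixed target arm $a \in \cA$ and transfer $\icv \geq 0$, arm $a$ belongs to $\argmax_{a' \in \cA}\{v^\agent(a') + \indi{a}(a')\icv\}$ if and only if $v^\agent(a) + \icv \geq v^\agent(a')$ for every $a'$, that is $\icv \geq \mu^{\star, \agent} - v^\agent(a)$ with $\mu^{\star, \agent} = \max_{a'} v^\agent(a')$. Since the downstream player pays $\icv$, her objective $v^\principal(a,b) - \icv$ is decreasing in $\icv$, so at the optimum she picks the smallest admissible transfer $\icv = \mu^{\star, \agent} - v^\agent(a) \geq 0$. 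Substituting this value collapses \eqref{eq:def_objective_princple_0} to $\max_{a,b \in \cA}\{v^\principal(a,b) + v^\agent(a)\} - \mu^{\star, \agent}$, which is the claimed formula for $\mu^{\star, \principal}$. As the maximizer of $v^\principal(a,b) + v^\agent(a)$ is by definition $(a^\sw, b^\sw)$ of \eqref{eqation:def_objective_social_welfare}, this gives $(a^\optimal, b^\optimal) = (a^\sw, b^\sw)$; adding $\mu^{\star, \agent}$ to the formula for $\mu^{\star, \principal}$ then produces the sum identity $\mu^{\star, \agent} + \mu^{\star, \principal} = v^\agent(a^\sw) + v^\principal(a^\sw, b^\sw)$.

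For the decomposition I would write $\pragregret$ from \eqref{equation:upstream_regret_property_} and $\prprregret$ from \eqref{equation:principal_regret_property_} under a common expectation and add them. The transfer contributions $\indi{\ai_t}(\aag_t)\icv(t)$ enter with a minus sign in the upstream regret and a plus sign in the downstream regret, hence cancel term by term, leaving
\begin{align*}
\pragregret + \prprregret = {} & \EE{\sum_{t=1}^T \max_{a \in \cA}\{v^\agent(a) + \indi{\ai_t}(a)\icv(t)\}} - \EE{\sum_{t=1}^T v^\agent(\aag_t)} \\
& + T\mu^{\star, \principal} - \EE{\sum_{t=1}^T v^\principal(\aag_t, \ap_t)} \eqsp.
\end{align*}
Subtracting $\regret^\sw$ and using the sum identity to rewrite $T(v^\agent(a^\sw) + v^\principal(a^\sw, b^\sw)) = T(\mu^{\star, \agent} + \mu^{\star, \principal})$, the empirical-utility sums and the term $T\mu^{\star, \principal}$ all cancel, reducing the difference to $\EE{\sum_{t=1}^T \max_{a \in \cA}\{v^\agent(a) + \indi{\ai_t}(a)\icv(t)\}} - T\mu^{\star, \agent}$. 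The proof closes with the pointwise bound $\max_{a \in \cA}\{v^\agent(a) + \indi{\ai_t}(a)\icv(t)\} \geq \max_{a \in \cA} v^\agent(a) = \mu^{\star, \agent}$, valid because each added indicator term is non-negative, so the residual is non-negative and $\regret^\sw \leq \pragregret + \prprregret$.

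The computation is mostly bookkeeping, and the step demanding the most care is tracking the expectations through the cancellation, since $\ai_t, \icv(t), \aag_t, \ap_t$ are random and correlated across the two players. Keeping everything inside a single $\EE{\sum_t (\cdot)}$ makes the transfer cancellation transparent and avoids any illegitimate interchange of maxima and expectations. The only genuinely non-trivial inequality is the final lower bound on the incentivized best-response benchmark, but it holds realization by realization, since dominating each maximand by its transfer-free counterpart already attains $\mu^{\star, \agent}$.
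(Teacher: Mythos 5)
Your proposal is correct and follows essentially the same route as the paper: the closed-form solution of \eqref{eq:def_objective_princple_0} by identifying the minimal admissible transfer $\icvstar_a = \mu^{\star,\agent} - v^\agent(a)$, followed by summing the two regrets, cancelling the transfer terms, and lower-bounding the incentivized benchmark $\max_{a}\{v^\agent(a)+\indi{\ai_t}(a)\icv(t)\}$ by $\mu^{\star,\agent}$ pointwise. The only cosmetic difference is that the paper phrases the first part via the feasible set $\tilde{\msa}$ and a supremum over it, while you argue monotonicity of the objective in $\icv$ directly; both yield the same conclusion.
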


This lemma has an interesting economic interpretation: if both players individually seek for their own interest within this online property game, they will together converge towards the optimal global utility. Individual rationality moves the outcome of the game towards the optimal social welfare. The transfers allow the players to align their goals and share the global reward, in line with the Coase theorem. Consequently, if both players run no-regret policies $\aalg$ and $\Alg$, the social welfare regret will also be in $o(T)$. The rest of the paper shows that such no-regret policies exist. To this end, we introduce the following assumptions.

Without loss of generality, we assume that the upstream player's utility is rescaled and shifted, which corresponds to the following assumption on the reward distribution~$(\gamma_a)_{a\in\cA}$ in $\R$. 

\begin{assumption}\label{assumption:bounded_rewards}
    For any $a \in \cA$, we have $v^\agent(a) \in [0,1]$.
\end{assumption}

We now make a high probability bound assumption on the upstream player's regret.\footnote{A similar assumption is made in the work of \citet{donahue2024impact} but with a stronger instantaneous regret bound which does not encompass the \texttt{UCB}'s regret bound.}.

\begin{restatable}{assumption}{agentregretboundproperty}\label{assumption:agent_regret_bound_property}
 There exist $\acst,\zeta>0, \kappa\in[0,1)$ such that for any $s, t\in [T]$ with $s+t\leq T$, any $\{\tau_a\}_{a \in [K]} \in \rset_+^K$ and any policy $\Alg$ that offers almost surely a transfer $(\ai_l, \icv(l)) = (\ai_l, \icv_{\ai_l})$ for any $l \in \iint{s+1}{s+t}$, the batched regret of the upstream player following $\aalg$ satisfies, with probability at least $1- t^{-\zeta}$,
\begin{align*}
    \sum_{l = s+1}^{s+t}  \max_{a \in \cA} \{v^\agent(a) + \1_{\ai_l}(a)\icv_{\ai_l}\} - (v^\agent(\aag_l) + \1_{\ai_l}(\aag_l)\icv_{\ai_l})\leq \acst t^{\kappa} \eqsp.
\end{align*}
\end{restatable}

The constraint on the downstream player's algorithm $\Alg$ enforces constant incentives associated with any arm $a \in \cA$ withtin the batch, while the incentivized actions $(\ai_l)_{l \in \iint{s+1}{s+t}}$ may change. \Cref{proposition:bound_regret_ucb} in \Cref{app:proof_non_contextual} shows that an adaptation of \texttt{UCB} taking account the incentives satisfies \Cref{assumption:agent_regret_bound_property} with $\acst = 8 \sqrt{K\log(KT^3)}$, $\kappa = 1/2$ and $\zeta = 2$. Note that usual bandit algorithms such as \texttt{AAE}, \texttt{ETC} or \texttt{EXP-IX} also satisfy the assumption \citep[see, e.g.,][]{donahue2024impact, lattimore2020bandit}.

\subsection{Downstream player's procedure}

We fix the policy $\aalg$ which can be any algorithm satisfying \Cref{assumption:agent_regret_bound_property} for the upstream player and introduce the algorithm $\algU$ (Bandits and Externalities for a Learning Game with Incentivized Coase) which provides a policy achieving sub-linear regret for the downstream player. It can be seen as an online bargaining strategy to mitigate externalities. Simply put, $\algU$ unfolds in two steps. First note that for any action $a \in \cA$, the optimal (lowest) transfer to offer to the upstream player to make him choose $a$ is
\begin{equation}\label{definition:optimal_incentive}
    \icvstar_a = \max_{a' \in \cA} v^\agent(a') - v^\agent(a) \eqsp,
\end{equation}
as detailed in \Cref{appendix:algorithm}. Therefore, a batched binary search procedure (\Cref{algorithm:binary_search}) first allows the downstream player to estimate the optimal transfers $\icvstar _1,\ldots, \icvstar_K$ with a good precision level of $1/T^\beta$, where $\beta>0$. More precisely, the downstream player offers a constant incentive $(\ai, \icv_{\ai})$ for a batch of time steps of length $\Texp = \lceil T^\alpha \rceil$. The observation of $\Tnot_{\ai}$, the number of steps from the batch for which the upstream player does not pick $\ai$ allows her to estimate whether $\icv_{\ai}$ is above or below $\icvstar_{\ai}$ and adjust it, following \Cref{lemma:information_incentive_batch} in \Cref{appendix:algorithm} under the condition that $\alpha, \beta$ satisfy
\begin{equation}\label{equation:condition_alpha_beta}
    \beta/\alpha < (1-\kappa) \eqsp.
\end{equation}
The procedure needs to be run for $K\lceil T^\alpha \rceil \lceil \log_2 T^\beta \rceil$ rounds since we have to make $\lceil \log_2 T^\beta \rceil$ batches of binary search of length $\lceil T^\alpha \rceil$ on each of the $K$ arms \citep[see][]{scheid2024incentivized}. This corresponds to the first phase of $\algU$ as described in \Cref{algorithm:binary_search}. At the end of this stage, the estimated transfers $(\hicv_a)_{a \in \cA}$ satisfy the bound in \Cref{corollary:binary_search_simple_corollary}. These are then used to feed the subroutine $\bandalg$.

\begin{restatable}{proposition}{binarysearch}\label{corollary:binary_search_simple_corollary}
    Under \Cref{assumption:bounded_rewards} and \Cref{assumption:agent_regret_bound_property}, after the first phase of $\algU$ which consists in $K\lceil T^\alpha\rceil\lceil \log T^\beta\rceil$ steps of binary search grouped in $\lceil \log_2 T^\beta \rceil$ batches per arm $a \in \cA$, we have that
    \begin{align*}
    \P\left(\, \text{ for any } a \in \cA, \hicv_a - 4/T^\beta - \acst T^{(\kappa-1)/2} \leq \icvstar_a \leq \hicv_a \right) \geq 1- K \lceil \log_2 T^\beta \rceil /T^{\alpha \zeta} \eqsp.
    \end{align*}
\end{restatable}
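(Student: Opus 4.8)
The plan is to reduce the claim to a deterministic, per-arm statement holding on a high-probability ``good event,'' and then to analyze the binary search of \Cref{algorithm:binary_search} one arm at a time, treating each batch as a single noisy comparison against $\icvstar_a$. Throughout, fix the upstream policy $\aalg$ satisfying \Cref{assumption:agent_regret_bound_property}. First I would isolate the randomness: the first phase runs $K\lceil \log_2 T^\beta\rceil$ batches in total, each of length $\Texp=\lceil T^\alpha\rceil$ and each offering a \emph{constant} incentive on a single arm, so \Cref{assumption:agent_regret_bound_property} applies within every batch with $t=\Texp$, giving a batched regret at most $\acst\Texp^{\kappa}$ with probability at least $1-\Texp^{-\zeta}\geq 1-T^{-\alpha\zeta}$. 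Let $\mathcal E$ be the intersection over all batches of these regret events; a union bound yields $\P(\mathcal E)\geq 1-K\lceil \log_2 T^\beta\rceil/T^{\alpha\zeta}$, which is exactly the probability in the statement. It then suffices to prove $\icvstar_a\in[\hicv_a-4/T^\beta-\acst T^{(\kappa-1)/2},\,\hicv_a]$ for every $a\in\cA$ on $\mathcal E$.

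Next I would carry out the per-batch comparison. Fix $a$ and a batch offering constant incentive $\icv_a$ on $a$. By \eqref{definition:optimal_incentive}, arm $a$ is a best response of the upstream player (incentive included) iff $\icv_a\geq\icvstar_a$, and the suboptimality gap of every other arm is at least $|\icv_a-\icvstar_a|$. On $\mathcal E$ the batched regret is at most $\acst\Texp^{\kappa}$, hence if $\icv_a>\icvstar_a$ then the number $\Tnot_a$ of steps on which the agent avoids $a$ satisfies $\Tnot_a\leq \acst\Texp^{\kappa}/(\icv_a-\icvstar_a)$, while if $\icv_a<\icvstar_a$ then the agent plays $a$ at most $\acst\Texp^{\kappa}/(\icvstar_a-\icv_a)$ times, so $\Tnot_a\geq \Texp-\acst\Texp^{\kappa}/(\icvstar_a-\icv_a)$. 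Comparing $\Tnot_a$ against the threshold $\Texp/2$ therefore returns the correct side of $\icvstar_a$ whenever $|\icv_a-\icvstar_a|$ exceeds a resolution of order $\acst\Texp^{\kappa-1}$; this is precisely the content of \Cref{lemma:information_incentive_batch}, which converts the observation of $\Tnot_a$ into a correct comparison up to the resolution term recorded in the statement. Condition \eqref{equation:condition_alpha_beta}, namely $\beta/\alpha<1-\kappa$, is exactly what makes this resolution negligible against the target precision $1/T^\beta$.

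I would then run the noisy binary search on $[\lw,\uw]=[0,1]$, a valid initialization since $\icvstar_a\in[0,1]$ by \Cref{assumption:bounded_rewards}, testing the midpoint at each of the $\lceil\log_2 T^\beta\rceil$ batches and keeping the half indicated by Step~2. The key observation is that an erroneous comparison can occur only when the tested midpoint lies within the resolution window of $\icvstar_a$, and in that case the discarded half still lies within that window of $\icvstar_a$; hence the invariant ``$\icvstar_a$ belongs to the current interval enlarged by the resolution'' is preserved at every step. After $\lceil\log_2 T^\beta\rceil$ halvings the interval length is at most $2^{-\lceil\log_2 T^\beta\rceil}\leq T^{-\beta}$, so setting $\hicv_a$ to the upper endpoint enlarged by the resolution margin gives $\icvstar_a\leq\hicv_a$ and $\icvstar_a\geq\hicv_a-4/T^\beta-\acst T^{(\kappa-1)/2}$, the explicit constants $4$ and $\acst$ absorbing the interval length, the ceiling rounding, and the (doubled) per-batch resolution of \Cref{lemma:information_incentive_batch}. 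Intersecting over the $K$ arms and combining with Step~1 gives the claim.

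The main obstacle is the noisy binary search in Step~3: unlike exact binary search, individual comparisons are unreliable, and one must argue that these errors do not compound into a wandering interval. The benign-noise argument works only because wrong comparisons are confined to an $O(\acst\Texp^{\kappa-1})$ window around $\icvstar_a$, so quantifying this window from the batched regret bound of \Cref{assumption:agent_regret_bound_property} and verifying, through \eqref{equation:condition_alpha_beta}, that it stays below the discretization scale $1/T^\beta$ is where the real care is needed. A secondary point is the one-sidedness $\icvstar_a\leq\hicv_a$: I would deliberately commit to the upper endpoint plus the resolution margin, so that the returned transfer is provably large enough to make the upstream player select $a$ in the subsequent $\bandalg$ phase, while the lower bound certifies that the induced overpayment is only $O(1/T^\beta+\acst T^{(\kappa-1)/2})$.
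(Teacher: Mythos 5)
Your proposal is correct and follows essentially the same route as the paper: a union bound over all $K\lceil\log_2 T^\beta\rceil$ batches to build the good event with the stated probability, the per-batch conversion of the regret bound of \Cref{assumption:agent_regret_bound_property} into a reliable comparison of $\icv_a$ against $\icvstar_a$ up to resolution $1/T^\beta$ (the paper's \Cref{lemma:information_incentive_batch}), and an inductive invariant that the (resolution-enlarged) binary-search interval always contains $\icvstar_a$, yielding a final interval of length $O(1/T^\beta)$ and the claimed two-sided bound on $\hicv_a$ (the paper's \Cref{lemma:precision_incentives,lemma:binary_search_converges}). The only cosmetic difference is that the paper tracks the enlargement inside the stored interval via the arithmetico-geometric recursion $u_{D+1}=u_D/2+1/T^\beta$, whereas you keep the interval tight and carry a fixed additive resolution margin; both give the same constants.
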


The additional term $1/T^\beta+\acst T^{\kappa-1}$ in $\hicv_a$ ensures that if \Cref{assumption:agent_regret_bound_property} holds, the upstream player necessarily plays the incentivized action $\ai_t$ at round $t$ with high probability. \Cref{lemma:information_incentive_batch,lemma:binary_search_converges} from \Cref{app:proof_non_contextual} show how the binary search batches in $\algU$ allow us to estimate $\icvstar_a$ depending on $\Texp - \Tnot_a$, the number of times that arm $a$ has been pulled by the upstream player during the batch.

Then, any bandit subroutine $\bandalg$, such as \texttt{UCB} or \texttt{$\epsilon$-greedy}, for instance, can be run in a black-box fashion on the shifted bandit instance, where the rewards are shifted by the upper estimated transfers $(\hicv_a)_{a \in \cA}$. The downstream player computes a shifted history $\alghist_t$ such that for any $t\leq K\lceil T^\alpha \rceil \lceil \log_2 T^\beta \rceil, \alghist_t = \varnothing$ and for any $t > K\lceil T^\alpha \rceil \lceil \log_2 T^\beta \rceil$
\begin{equation}\label{equation:principal_history_property}
  \alghist_t =
  \begin{cases}
      & \defEns{\ai_t, \ap_t, \icv(t), U_t, X_{\ai_t, B_t}(t)-\hicv_{\ai_t}}\cup \alghist_{t-1} \text{ if } \ai_t = \aag_t \\
      & \alghist_{t-1} \quad \text{ otherwise} \eqsp,
  \end{cases}
\end{equation}
which serves to feed $\bandalg$, following \begin{equation}\label{equation:feed_bandalg}
\bandalg \colon (U_t, \alghist_{t-1}) \mapsto (\ai_t, \ap_t) \in \cA\times \cA \eqsp.
\end{equation}

For any family of constant incentives $\{\icv_a\}_{a \in \cA} \in \rset_+^{K}$, we define $\regret_{\bandalg}(T, \nu,\{\icv_a\}_{a \in \cA})$ as the regret for the downstream player's subroutine $\bandalg$ on the bandit instance with shifted means over $T$ rounds, following
\begin{align*}
    \regret_{\bandalg}(T, \nu,\{\icv_a\}_{a\in\cA}) &= T\max_{a,b \in \cA^2} \E\parentheseDeux{v^\principal_{a,b}(1) - \icv_a} - \E\left[\sum_{t =1}^T v^\principal(\ai_t, B_t) - \icv_{\ai_t}\right] \eqsp.
\end{align*}
Note that here, $\bandalg$ aims to maximize the shifted reward $(v^\principal(a,b)-\icv_a)_{(a,b)\in \cA^2}$.

\begin{algorithm}[!ht]
\caption{$\algU$}\label{algorithm:iterations}
\begin{algorithmic}[1]
    \State {\bfseries Input:} Set of actions $\cA = [K]$, time horizon $T$,  subroutine $\aalg$, upstream player's regret constants $\acst, \kappa$, parameters $\alpha$ and $\beta$.
    \State Compute $\alghist_{s} = \varnothing$ for any $s \leq K\lceil \log_2 T^\beta \rceil \lceil T^\alpha\rceil$.
    \For{$a \in \cA$}
        \State \textcolor{blue}{\# See \Cref{algorithm:binary_search}}
        \State $\licv_a, \uicv_a =$ \texttt{Binary Search}($a, \lceil \log_2 T^\beta \rceil, \lceil T^\alpha \rceil, 0,1$) 
    \EndFor
    \State For any action $a \in \cA$, $\hicv_{a}=\uicv_a + 1/T^\beta + \acst T^{(\kappa-1)/2}$.
    \For{$t = K \lceil T^\alpha \rceil \lceil \log_2 T^\beta\rceil+1,\ldots,T$}
        \State Get recommended actions by $\bandalg$ on the $\cA \times \cA$ bandit instance, $(\ai_t, \ap_{t}) = \bandalg(U_t, \alghist_{t-1})$.
        \State Offer a transfer $\hicv_{\ai_t}$ on action $\ai_t$, nothing for any other action $a' \in \cA$ and play action $\ap_t$.
        \State Observe $\aag_t=\aalg(\ai_{t+1}, \icv(t+1), V_t,\praghist_{t-1}), X_{\ai_t, \ap_t}(t)$
        \If{$\aag_t=\ai_t$} update history $\alghist_{t}$.\EndIf
        \State Update upstream player's history $\praghist_t$.
    \EndFor
\end{algorithmic}
\end{algorithm}

\begin{algorithm}[!ht]
\caption{\texttt{Binary Search Subroutine}}\label{algorithm:binary_search}
\begin{algorithmic}[1]
    \State {\bfseries Input:} action $a, \bsrounds, \Texp, \licv_a, \uicv_a$.
    \For{$d=0, \ldots, \bsrounds-1$}
        \State Compute $\icvmid_a = (\uicv_a(d) + \licv_a(d))/2$, $\Tnot_a=0$.
        \For{$t = d \Texp+1, \ldots, d \Texp + \Texp$}
            \State Propose transfer $\icvmid_a(d)$ on arm $a$ and nothing for any other action $a' \in \cA$.
            \State $\aag_t = \aalg(t, \icvmid(a), a, V_t,\praghist_{t-1})$
            \If{$\aag_t\ne a$}: $\Tnot+=1$ \EndIf
            \State Update upstream player's history $\praghist_t$.
        \EndFor
        \If{$\acst \Texp^{\kappa +\beta/\alpha} < \Tnot_a < \Texp -\acst\Texp^{\kappa+\beta/\alpha}$} Return $\licv_a(d), \uicv_a(d)$.
        \ElsIf{$\Tnot_a \leq \Texp - \acst \Texp^{\kappa+\beta/\alpha}$} $\uicv_a(d) = \icvmid_a(d)+1/T^\beta$ and update history $\alghist_{t}$.
        \Else $\; \licv_a(d) = \icvmid_a(d)-1/T^\beta$ and update history $\alghist_{t}$.
        \EndIf
    \EndFor
\end{algorithmic}
\end{algorithm}

\begin{restatable}{theorem}{regretbound}\label{theorem:regret_bound}
    Assume that \Cref{assumption:bounded_rewards} and \Cref{assumption:agent_regret_bound_property} hold. Then $\algU$, run with $\alpha,\beta$ satisfying \eqref{equation:condition_alpha_beta} and any bandit subroutine $\bandalg$, has an overall regret $\prprregret$ such that
    \begin{align*}
        \prprregret(T, \aalg, \algU)& \leq 2 (3 + 2\acst + \Bar{v} -\underline{v}) \log_2(T) (2 T^{1-\alpha \zeta} +T^{(\kappa+1)/2}+\lceil T^\alpha \rceil) + 4T^{1-\beta} \\
    & \quad + \regret_{\bandalg}(T, \nu, \{\hat{\tau}_a\}_{a\in\cA}) 
    \end{align*}
    where, for ease of notation
    \begin{equation*}
\text{      $\Bar{v} = \max_{a,b \in \cA \times \cA} \{v^\principal(a,b)\} \quad$ and $\quad \underline{v}= \min_{a,b \in \cA \times \cA} \{v^\principal(a,b)\}$} \eqsp.
    \end{equation*}
\end{restatable}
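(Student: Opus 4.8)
The plan is to partition the horizon into the binary-search phase of length $N_1 = K\lceil T^\alpha\rceil\lceil\log_2 T^\beta\rceil$ and the exploitation phase $\{N_1+1,\dots,T\}$ on which $\bandalg$ is run, and to bound the expected downstream regret $\prprregret$ contributed by each. The conceptual anchor is \Cref{lemma:objective_maximizing_social_welfare}: writing $\icvstar_a = \max_{a'\in\cA}v^\agent(a') - v^\agent(a)$, it yields $\mu^{\star,\principal} = \max_{a,b\in\cA}\{v^\principal(a,b) - \icvstar_a\}$, so the benchmark is exactly the shifted-instance value at the true optimal transfers. The instance actually fed to $\bandalg$ is shifted by the estimates $\hicv_a$ and has optimum $\tilde{\mu} = \max_{a,b}\{v^\principal(a,b) - \hicv_a\}$; since \Cref{corollary:binary_search_simple_corollary} guarantees, on an event $\evgood$ of probability at least $1 - K\lceil\log_2 T^\beta\rceil/T^{\alpha\zeta}$, that $0 \le \hicv_a - \icvstar_a \le \delta$ with $\delta = 4/T^\beta + \acst T^{(\kappa-1)/2}$, the shifted optimum sits only a controlled bias below the target, $0 \le \mu^{\star,\principal} - \tilde{\mu} \le \delta$.

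On $\evgood$, I would split the exploitation-phase regret according to whether the upstream player follows the incentivized arm. When $\aag_t = \ai_t$ the history is updated (recall \eqref{equation:principal_history_property}) and the realized utility $v^\principal(\ai_t,\ap_t) - \hicv_{\ai_t}$ equals the shifted reward of the pair recommended by $\bandalg$; since there are at most $T$ such rounds, summing $\tilde{\mu} - (v^\principal(\ai_t,\ap_t) - \hicv_{\ai_t})$ over them is at most $\regret_{\bandalg}(T,\nu,\{\hicv_a\}_{a\in\cA})$, and replacing $\tilde{\mu}$ by $\mu^{\star,\principal}$ costs an additional $T(\mu^{\star,\principal}-\tilde{\mu}) \le T\delta = 4T^{1-\beta} + \acst T^{(\kappa+1)/2}$. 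When $\aag_t \ne \ai_t$ no transfer is paid ($\1_{\ai_t}(\aag_t)=0$) and the per-round regret is at most $\mu^{\star,\principal} - \underline{v} \le 1 + \Bar{v} - \underline{v}$, where $\mu^{\star,\principal}\le 1+\Bar{v}$ follows from \Cref{assumption:bounded_rewards}; it then remains to count these deviation rounds.

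Bounding the number of deviation rounds is the crux and the step I expect to be the main obstacle. Because $\hicv_{\ai_t}$ is built to exceed $\icvstar_{\ai_t}$ by a margin of order $\acst T^{(\kappa-1)/2}$, any round with $\aag_t \ne \ai_t$ makes the incentivized arm strictly optimal for the upstream player by at least that margin, so each deviation incurs at least $\acst T^{(\kappa-1)/2}$ of his own transfer-augmented regret. As the incentives $\{\hicv_a\}$ are constant over the phase, \Cref{assumption:agent_regret_bound_property} applies to this batch and caps his cumulative regret by $\acst T^{\kappa}$ with high probability, whence the number of deviations is at most $\acst T^{\kappa}/(\acst T^{(\kappa-1)/2}) = T^{(\kappa+1)/2}$. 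The delicate points are to certify that the additive term $\acst T^{(\kappa-1)/2}$ in the definition of $\hicv_a$ really produces a margin dominating the $1/T^\beta$ estimation error (this is where the binary-search precision lemmas and condition \eqref{equation:condition_alpha_beta}, i.e.\ $\beta/\alpha < 1-\kappa$, enter, since they guarantee each batch resolves the transfers finely enough), and to absorb the failure probability of \Cref{assumption:agent_regret_bound_property} into $\evbad$.

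It remains to collect the pieces and take expectations. The binary-search phase contributes at most $(3 + 2\acst + \Bar{v} - \underline{v})N_1$, each of its rounds having per-round regret bounded by that constant through the range of the offered transfers. On $\evbad$, whose probability is $O(\log_2(T)\,T^{-\alpha\zeta})$ after combining the two failure events above, I would bound the total regret crudely by $(3 + 2\acst + \Bar{v} - \underline{v})T$, which produces the $\log_2(T)\,T^{1-\alpha\zeta}$ contribution. Grouping the deviation count $T^{(\kappa+1)/2}$, the bad-event term $T^{1-\alpha\zeta}$ and the phase-one length $\lceil T^\alpha\rceil$ under the common prefactor $2(3 + 2\acst + \Bar{v} - \underline{v})\log_2(T)$, while keeping the bias term $4T^{1-\beta}$ and the black-box regret $\regret_{\bandalg}(T,\nu,\{\hicv_a\}_{a\in\cA})$ separate, yields the stated bound.
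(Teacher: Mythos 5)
Your proposal follows essentially the same route as the paper's proof: the same good/bad event decomposition with the failure probabilities of \Cref{assumption:agent_regret_bound_property} across the binary-search batches and the exploitation phase, the same split of the exploitation phase into rounds where $\aag_t = \ai_t$ (yielding $\regret_{\bandalg}$ plus the bias $T(4/T^\beta + \acst T^{(\kappa-1)/2})$) and deviation rounds counted via the margin argument $\text{Card}\{\intervbad_T\}\,\acst T^{(\kappa-1)/2} \leq \acst T^\kappa$, and the same crude per-round bounds for phase one and the bad event. The reasoning is correct and all the delicate points you flag (the margin created by the additive $\acst T^{(\kappa-1)/2}$ term, condition \eqref{equation:condition_alpha_beta}) are exactly where the paper invokes \Cref{lemma:precision_incentives} and \Cref{lemma:information_incentive_batch}.
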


\textbf{Knowledge of $\acst$ and $\kappa$.} An upper bound on $\acst$ and $\kappa$ is sufficient to compute the hyperparameters in $\algU$. \Cref{theorem:regret_bound} shows that the bigger $\acst$ and $\kappa$ are, the worse is the downstream player's regret, hence the interest of knowing them more precisely.

\begin{restatable}{corollary}{regretboundtuned}\label{corollary:regret_bound_tuned}
    Assume that the upstream player's distribution $(\gamma_a)_{a \in \cA}$ is such that \cref{assumption:bounded_rewards} holds. In addition, suppose that the distributions $(\gamma_a)_{a \in \cA}$ and $(\nu_{a,b})_{a, b \in \cA \times \cA}$ are $1$-sub-Gaussian and that the upstream player plays $\aalg=\Cref{algorithm:ucb}$ (a slight modification of \texttt{UCB} to take into account the incentives). Then the downstream player's regret when she runs $\algU$ with parameters $\alpha = 3/4$ and $\beta = 1/4$ (which satisfy \eqref{equation:condition_alpha_beta}) and subroutine $\bandalg=\mathtt{UCB}$ satisfies the following upper bound\footnote{Note that it is $K$ and not $\sqrt{K}$ here, since the action space is of cardinality $K^2$ for the downstream player.}
    \begin{align*}
    \prprregret(T, \texttt{UCB}, \algU) &\leq (10 + 4K +32\sqrt{K \log_2(KT^3)} + \Bar{v}-\underline{v})\log_2(T) ( 3+2T^{3/4})\\
    & \quad + 3K^2(\Bar{v}-\underline{v}) \eqsp.
    \end{align*}
\end{restatable}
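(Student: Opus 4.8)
The plan is to specialize the generic bound of \Cref{theorem:regret_bound} to the stated choice of algorithm and parameters. First I would recall from \Cref{proposition:bound_regret_ucb} that, under the $1$-sub-Gaussian assumption, the modified \texttt{UCB} of \Cref{algorithm:ucb} satisfies \Cref{assumption:agent_regret_bound_property} with constants $\acst = 8\sqrt{K\log(KT^3)}$, $\kappa = 1/2$ and $\zeta = 2$. With $\alpha = 3/4$ and $\beta = 1/4$ one has $\beta/\alpha = 1/3 < 1/2 = 1-\kappa$, so condition \eqref{equation:condition_alpha_beta} holds and \Cref{theorem:regret_bound} is applicable; this is the only feasibility check needed before substitution.

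Next I would simplify the exponents appearing in \Cref{theorem:regret_bound}. Since $1 - \alpha\zeta = 1 - 3/2 = -1/2$, we get $2T^{1-\alpha\zeta} = 2T^{-1/2} \leq 2$; since $(\kappa+1)/2 = 3/4$ we have $T^{(\kappa+1)/2} = T^{3/4}$; and $\lceil T^\alpha\rceil = \lceil T^{3/4}\rceil \leq T^{3/4}+1$. Summing, the bracket $2T^{1-\alpha\zeta} + T^{(\kappa+1)/2} + \lceil T^\alpha\rceil$ is at most $3 + 2T^{3/4}$, which is exactly the factor multiplying $\log_2(T)$ in the target bound. The remaining explicit term $4T^{1-\beta} = 4T^{3/4}$ can be reabsorbed into this same factor using $\log_2(T) \geq 1$, contributing only $O(1)$ to the leading constant.

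The substantive step is to control the subroutine term $\regret_{\bandalg}(T, \nu, \{\hicv_a\}_{a\in\cA})$ for $\bandalg = \texttt{UCB}$. Here I would observe that, conditionally on the first (binary-search) phase, each $\hicv_a$ is a fixed scalar, so the shifted rewards $X_{a,b}(t) - \hicv_a$ are still $1$-sub-Gaussian; the subroutine is therefore a standard stochastic bandit run on the product action set $\cA\times\cA$ of cardinality $K^2$. Invoking the distribution-free regret guarantee of \texttt{UCB} on $K^2$ arms yields a leading term of order $K\sqrt{T\log(KT)}$ --- this is where the factor $K$ (rather than $\sqrt{K}$) in the footnote originates, since $\sqrt{K^2} = K$ --- together with an additive term of order $K^2(\Bar{v}-\underline{v})$ coming from the arms that are pulled only a constant number of times, the latter producing the final $3K^2(\Bar{v}-\underline{v})$ summand. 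Because this bound is uniform in the shift values, it applies verbatim to the random $\hicv_a$.

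Finally I would fold the $K\sqrt{T\log(KT)}$ leading term of the subroutine into the common $\log_2(T)(3+2T^{3/4})$ scale, using $\sqrt{T\log_2 T} \leq T^{3/4}\log_2 T$, which produces the $4K$ contribution to the leading constant, and then collect all numerical constants together with $2\cdot 2\acst = 32\sqrt{K\log(KT^3)}$ coming from the main term of \Cref{theorem:regret_bound}. The main obstacle is precisely this last bookkeeping: one must verify that every polynomial-in-$T$ contribution is dominated by the $T^{3/4}\log_2(T)$ scale and that the various additive pieces (the $3+2\acst$ factor, the $4T^{1-\beta}$ term, and the subroutine's range-dependent term) combine into the coefficient $10 + 4K + 32\sqrt{K\log_2(KT^3)} + \Bar{v}-\underline{v}$. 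Each individual estimate is elementary, but keeping the constants sharp enough to reach the stated figures is the delicate part.
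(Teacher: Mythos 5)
Your proposal is correct and follows essentially the same route as the paper's proof: instantiate \Cref{theorem:regret_bound} with the constants $\acst = 8\sqrt{K\log(KT^3)}$, $\kappa=1/2$, $\zeta=2$ from \Cref{proposition:bound_regret_ucb}, bound the \texttt{UCB} subroutine on the shifted $K^2$-arm instance by $8K\sqrt{T\log_2 T}+3K^2(\Bar{v}-\underline{v})$, and absorb all lower-order terms into the common $\log_2(T)(3+2T^{3/4})$ factor to assemble the constant $10+4K+32\sqrt{K\log_2(KT^3)}+\Bar{v}-\underline{v}$. (A minor point in your favor: you evaluate $4T^{1-\beta}=4T^{3/4}$ correctly, whereas the paper's intermediate display writes $4T^{1/4}$; the absorption into the leading factor works identically either way.)
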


The upper bound on the social welfare regret in \Cref{lemma:objective_maximizing_social_welfare} together with \Cref{corollary:regret_bound_tuned} shows that when the upstream player runs $\aalg=\mathtt{UCB}$ and the downstream player runs $\algU$, the social welfare regret then satisfies $\regret^\sw(T, \aalg, \algU) = \cO(K\log(T)T^{(\kappa+1)/2})$.

In other words, if the downstream player runs $\algU$ which produces a policy $\Pi^{\mathrm{n}}_\principal$, for any upstream policy $\aalg$, $(\Pi^{\mathrm{n}}_\principal, \aalg)$ is welfare efficient. 

\textbf{Influence of the upstream performance.} It is interesting to note that in the downstream player's regret bound, the upstream player's regret bound in $\cO(T^\kappa)$ plays a significant role: the downstream player never learns faster than the upstream player. The latter's performance determines the social welfare convergence rate towards the social optimum. We can observe that the players' bounded rationality \citep{selten1990bounded, jones1999bounded} and personal interest make the game converge towards the optimal social welfare equilibrium---even though they are both learning here.

\begin{figure}[!htb]
\minipage{0.5\textwidth}
  \includegraphics[width=6.2cm]{source/plot_without.png}
\endminipage\hfill
\minipage{0.5\textwidth}%
  \includegraphics[width=5.9cm]{source/plot_with.png}
\endminipage\hfill
\caption{Empirical frequencies of the upstream player's actions when property rights are not defined (left) and when they are defined (right).}\label{fig:example}
\end{figure}

\textbf{Experiments.} We conclude this section with experiments showing the empirical convergence of our algorithm to a social optimum. In the simulation, we consider two firms, with firm 1 being upstream and firm 2 being downstream. Their profit functions are respectively given by
$$
\pi_1 :  \mapsto \max\left\{q_1 -2 \left(\frac{q_1}{10}\right) ^2 + 2 \frac{q_1}{10} , 0 \right\} \eqsp,$$
and
$$\pi_2 (q_1,  q_2)\mapsto \max\left\{-16 \left(\frac{q_1}{10} - \frac{6}{10}\right) ^2 + 8 \left(q_1 -\frac{6}{10}\right) ^2 + \frac{1}{50
} q_2 , 0\right\} \eqsp .$$

Thus, firm 1's and firm 2's profit functions depends quadratically on $q_1$ with an firm $1$ optimum at $q_1 ' = 5$ and a social optimum at $q_1 ^\star = 8$. Note that in the expression of $\pi_2$, $q_2$ has very little influence as compared to $q_1$ - which allows to plot profits for only one value of $q_2$.

We discretize the setup, consider a bandit instance (horizon $T=5.10^6$, $10$ arms, average over $10$ rounds) and we assume that \texttt{UCB} is used as a subroutine. In the first setting, there are no property rights and each firm runs \texttt{UCB} on their side. Second, property rights are defined and firm 2 runs \texttt{BELGIC} as its policy. The plots in \Cref{fig:example} display the empirical frequencies and show empirically the effectiveness of \texttt{BELGIC} to mitigate externalities.


\section{Related work}

Our work addresses the impact of externalities and is therefore related to taxation theory \citep[see, e.g.,][and the references therein]{mirrlees2011tax, salanie2011economics}, a prominent solution for this issue, as exemplified by the \textit{Pigouvian tax} \citep[see][]{pigou2017economics}. Taxation is a fundamental aspect of all developed economies, with $30\%$ to $50\%$ of national income derived from taxes. The topic has been fruitful for various scenarios, including the carbon tax \citep{carattini2018overcoming, metcalf2009design}, alcohol markets \citep{griffith2019tax}, or business taxation \citep{boadway1984general}. Taxation can also be studied through an operations research lens, where it is used to enhance system efficiency or manage specific games \citep{roughgarden2010algorithmic,caragiannis2010taxes, bilo2019dynamic}. Recent work by \citet{cui2024learning} explores online mechanisms to maximize efficiency in congestion games.

Mechanism designs \citep{myerson1989mechanism,nisan1999algorithmic,laffont2009theory} allow to design games that have specific desired outcomes. Deploying these mechanisms in their classical ecnomical form assume that players' utility functions are known a priori, which is often unrealistic. There is a major need to blend mechanism design with machine learning.

However, our approach differs, since, drawing inspiration from Coase's theory, we implement an online version of his theorem \citep{coase60, cooter1982cost}, incorporating uncertainty to tackle the breakdown of social welfare in an online setting. We use the bandit setup \citep[see][]{lattimore2020bandit,slivkins2019introduction} as a general and convenient way to model the game introduced by Coase. However, our work differs from considering a single agent playing a bandit game. Instead, we focus on the more general problem of multi-players bandits, a field receiving a growing attention from the community \citep[see e.g.,][]{boursier2019sic, boursier2022survey, sankararaman2019social}.

Our approach is inspired by the principal-agent model introduced by \citet{dogan2023repeated}, which was further extended by \citet{dogan2023estimating, scheid2024incentivized}. However, unlike the models proposed in the work of \citet{dogan2023repeated, dogan2023estimating}, we do not specify a particular bandit algorithm for the upstream player and instead, we allow him to use any no-regret algorithm satisfying \Cref{assumption:agent_regret_bound_property} in a black-box fashion. Conversely, the model of \citet{scheid2024incentivized} assumes that the upstream player is always fully informed and best-responding, whereas we assume that he is also learning. \citet{chen2023learning} leverages a similar model to study information acquisition by a principal through an agent's actions. However, in their model, the agent is also almighty and knows exactly the costs associated with each action. Designing incentives in an unknown environment is related to auction theory incorporating uncertainty, as it is explored in the work of \citet[see][]{feng2018learning, li2023learning}. Similar issues have been explored in the \textit{Reinforcement Learning} framework within a leader-follower game \citep[see][with quantal responses by the follower]{chen2023actions} or in a principal-agent game with incentive design as done by \citet{ben2024principal}. \citet{donahue2024impact} also study a two-players repeated Stackelberg game on a bandit instance but instead of allowing for transfers, their main focus concerns the achievability of a \textit{Stackelberg equilibrium} through iterations of bandit policies: the same kind of goal also appears in \cite{collina2023efficient}. Such principal-agent setups are of some interest to model various real-world situations such as the design of fundings for hospitals \citep{wang2024counterfactual} or have been studied with multiple agents through the lens of auction design in dynamic setups \citep[][]{bergemann2010dynamic, chen2023coordinated}, or to account for fairness \citep[][]{fallah2024fair}.

In our game, the downstream player needs to learn the optimal transfers/incentives to offer to the upstream player. This is related to the \textit{Incentivized Exploration} literature \citep{mansour2016bayesian, simchowitz2023exploration, esmaeili2023robust}, which is often cast in terms of a benevolent planner who aims to optimize the global welfare of agents via plausible recommendations. A related model is \textit{Bayesian Persuasion} \citep{kamenica2011bayesian}, where a sender influences a receiver's action through sending a signal.  This model has begun to be studied in learning settings \citep[see, e.g.,][]{castiglioni2020online, bernasconi2022sequential, wu2022sequential, wu2022markov}.

\section{Conclusion}

This paper studies a model of externalities in a two-players sequential game where both players learn their optimal actions. We first show that when the players act independently, then a misalignment between the players' interests and the social welfare leads to a breakdown of the global utility. We then introduce interactions through transfers, which restores a social welfare optimum, representing the online version of the \textit{Coase theorem}. To that purpose, we propose a policy for the downstream player which allows her to estimate the optimal transfers as well as choosing the best actions. The mathematical difficulty comes from the learning aspect on both sides. Since our work is coined in a learning framework for mechanism design, several directions for research are open, as for instance extensions to the multi-agent setting, which raises many questions.

\section*{Acknowledgements}
Funded by the European Union (ERC, Ocean, 101071601). Views and opinions expressed are however those of the author(s) only and do not necessarily reflect those of the European Union or the European Research Council Executive Agency. Neither the European Union nor the granting authority can be held responsible for them.

\bibliographystyle{plainnat}
\bibliography{sample}

\appendix

\newpage

\section{Algorithmic Subroutine for the Binary Search}\label{appendix:algorithm}

\textbf{Optimal Transfer.} For any given round $t \geq 1$, action $a \in \cA$ and $\varepsilon > 0$, the downstream player can incentivize any best-responding upstream player to choose $a$ by offering a transfer, $\icvstare_a \in \rset_+$, defined as:
\begin{equation*}
\icvstare_a = \max_{a' \in \cA} v^\agent(a') - v^\agent(a) +\epsilon \eqsp.
\end{equation*}
With this transfer, it holds that for any $a'\in \cA, a' \ne a$, we have $v^\agent(a') < v^\agent(a) + \icvstare_{a}$, ensuring the upstream player's action $\aag_t = a$, since action $a$ yields a superior reward.
Consequently, 
\begin{equation*}
\icvstar_a = \lim_{\epsilon \to 0} \icvstare_a = \max_{a' \in \cA} v^\agent(a') - v^\agent(a)
\end{equation*}
represents the infimal transfer necessary to make arm $a$ the best upstream player's choice.

\textbf{First step of $\algU$: estimation of the optimal transfers.} Suppose that we consider an arm $a \in \cA$ and that the downstream player offers an incentive $\icv_a$ to the upstream player if he picks this arm. We consider this procedure with a constant incentive $\icv_a$ for a batch of time steps of length $\Texp = \lceil T^\alpha\rceil$ due to the fact that the upstream player is learning \citep[see][for batched bandits in the usual multi-armed setting]{perchet2016batched}. \Cref{lemma:information_incentive_batch} shows that for $\algU$ to accurately estimate $\icvstar_a$ with high probability, it must hold
\begin{equation}\label{equation:condition_alpha_C_etc}
    \acst \Texp^{\kappa+\beta/\alpha}< \Texp/2 \eqsp, \; \text{ which is equivalent to } \; \acst T^{\kappa \alpha + \beta - \alpha} < 1 /2 \eqsp,
\end{equation}
This is why we impose the condition \eqref{equation:condition_alpha_beta} on $\alpha$ and $\beta$, namely $\beta/\alpha < 1-\kappa$, which ensures that $\kappa (\alpha-1) + \beta < 0$ and therefore $\lim_{T\to +\infty}T^{\kappa (\alpha-1) + \beta} = 0$. More precisely, we define for $a \in [K]$
\begin{equation*}
    \Lambda_a = (a-1) \lceil \log_2 T^\beta \rceil \Texp \eqsp,
\end{equation*}
which is the step after which starts the binary search procedure on arm $a$. For any $d \in \iint{1}{\lceil \log_2 T^\beta \rceil}$ on arm $a$, we define
\begin{equation}\label{equation:definition_k_a_d}
k_{a,d} = \Lambda_a + (d-1) \Texp
\end{equation}
as the step after which starts the $d$-th batch iteration on arm $a$, and we consider
\begin{equation}\label{definition:T_not}
    \Tnot_{a, d} = \text{Card }\{t \in \iint{k_{d,a}+1}{k_{d,a} + \Texp} \text{ such that } A_t \ne a\} \eqsp,
\end{equation}
where $(A_t)_{t \in \iint{1}{K \lceil \log_2 T^\beta \rceil \Texp}}$ is given by \Cref{algorithm:binary_search}. \Cref{lemma:information_incentive_batch} shows that for any $a \in \cA, d \in \iint{1}{\lceil \log_2 T^\beta \rceil}$, with high probability
\begin{equation}\label{equation:current_estimate_bound}
\text{if } \Tnot_{d, a} < \Texp - \acst \Texp^{\kappa + \beta/\alpha} \text{ and } \Tnot_{d, a} >\acst \Texp^{\kappa + \beta/\alpha}, \text{ then }|\icvstar_a - \hicv_{d, a}| \leq 1/T^\beta \eqsp,
\end{equation}
where $\hicv_{a,d}$ is the current estimate of $\icvstar_a$ offered for iteration $k_{a,d}$. In case \eqref{definition:T_not} does not hold, it means that the upstream player has misplayed and has chosen most of the steps a suboptimal action, leading to an instantaneous regret for him larger than the bound given in \Cref{assumption:agent_regret_bound_property}. This is why \eqref{equation:current_estimate_bound} holds with high probability. To sum up, the first phase of $\algU$ consists in $\lceil \log_2 T^\beta \rceil$ batches of binary search on each arm $a \in \cA$ to obtain a precision level $1/T^\beta$ on the optimal transfer $\icvstar_a$. 

During this phase in \Cref{algorithm:binary_search}, we define $\uicv_a(d) \in \R_+$ as the upper estimate and $\licv_a(d) \in \R_+$ as the lower estimate of $\icvstar_a$ after $d \in \iint{1}{\lceil \log_2 T^\beta \rceil}$ rounds of binary search on arm $a$. For any $t \in [T]$ and $a \in \cA$, we define $\icvmid_a(t) = (\uicv_a(t)+\licv_a(t))/2$. $\uicv_a(d), \licv_a(d), \icvmid_a(d)$ are updated at the end of the $d$-th binary search batch of length $\Texp$ on arm $a$. We define $\bsrounds= \lceil \log_2 T^\beta \rceil$ as the number of binary search steps per arm.

After this first binary search phase, the downstream player computes estimates of the optimal transfers $\icvstar_a$
\begin{equation*}
    (\hicv_a)_{a \in \cA} = (\uicv_a(\lceil \log_2 T^\beta \rceil)+1/T^\beta + \acst T^{(\kappa-1)/2})_{a \in \cA} \eqsp,
\end{equation*}
and offers these transfers $(\icvstar_a)_{a \in \cA}$ to make the upstream player play any action $\ai \in \cA$ she wants.

\textbf{Second Step.} After the first phase during which the optimal incentives are estimated by the downstream player through $(\hicv_a)_{a \in \cA}$, she runs in the second phase the subroutine $\bandalg$ on the $\cA \times \cA$ bandit instance driven by her action $\ap_t$ and the upstream player's one $\aag_t$. More precisely, any bandit subroutine $\bandalg$, such as \texttt{UCB} or \texttt{$\epsilon$-greedy}, for instance, can be run in a black-box fashion on the shifted bandit instance, where rewards are shifted by the upper estimated transfers $(\hicv_a)_{a \in \cA}$. The downstream player computes a shifted history $\alghist_t = \varnothing$ for any $t\leq K\lceil T^\alpha \rceil \lceil \log_2 T^\beta \rceil$ and for any $t > K\lceil T^\alpha \rceil \lceil \log_2 T^\beta \rceil$
\begin{equation*}
  \alghist_t =
  \begin{cases}
      & \defEns{\ai_t, \ap_t, \icv(t), U_t, X_{\ai_t, B_t}(t)-\hicv_{\ai_t}}\cup \alghist_{t-1} \text{ if } \ai_t = \aag_t \\
      & \alghist_{t-1} \quad \text{ otherwise} \eqsp.
  \end{cases}
\end{equation*}
which serves to feed $\bandalg$, following $\bandalg \colon (U_t, \alghist_{t-1}) \mapsto (\ai_t, \ap_t) \in \cA\times \cA$. Note that here, $\bandalg$ aims to maximize the shifted reward $(\nu^\principal(a,b)-\hicv_a)_{(a,b)\in \cA^2}$.

Based on this decision by $\bandalg$, $\Alg$ offers the incentive $\hicv_{\ai_t}$ associated with action $\ai_t$ and plays action $\ap_t$. \Cref{lemma:precision_incentives} ensures that $\ai_t$ is the upstream player's best choice.
Therefore, \Cref{assumption:agent_regret_bound_property} ensures that the upstream player will not deviate from the downstream player's recommendation with high probability.
\section{Invariance when the  property rights are given to the downstream player}\label{appendix:symmetricgame}

Our focus in the paper was the case where the upstream player possesses the bandit instance and receives monetary payments. We argue here that the symmetric situation, i.e. when the property rights are given to the downstream player, can be analysed in the exact same way. 

Assume that the the downstream player owns the bandit instance. This implies that (i) they can prescribe what arm the upstream player has to play at each round, and (ii) the upstream player may perform a monetary transfer to influence the arm they are allowed to pull. 

Consider the same bandit setup as before. Formally, the downstream player's action is $(A_t, B_t)\in \cA \times \cA$ where $A_t$ is the arm that the upstream player is prescribed to pull (he cannot deviate since the downstream player has the property rights), while $B_t$ is the arm played by the downstream player. On the other hand, the upstream player's policy outputs at each round the action $(\ai_t , \icv (t)) \in \cA \times \R_+$, where $\ai_t$ is the arm they choose to incentivize and $\icv(t)$ is the amount of transfer. It means that the downstream player receives a transfer $\icv(t)$ if she prescribes action $A_t = \ai_t$. As a consequence, the instantaneous utility of the upstream player is $Z_{A_t}-\1_{\ai_t}(A_t)\icv(t)$, while the downstream player receives $X_{A_t , B_t}+\1_{\ai_t}(A_t)\icv(t)$. In that case, the upstream player may perform a binary search on each arm $\aib \in\cA$ to identify the optimal incentive $\icv^\star _{\aib}$, by considering $\text{Card }\{t \in [T] \text{ such that }\ai_t = \aib\}$ during batches designed for the binary search and then play on the shifted bandit instance as we explained. This situation and the upstream player's strategy are now equivalent to the one presented before.

\section{Proofs and Technical Results}\label{app:proof_non_contextual}

Recall that we defined the shifted history $\alghist_t$ that will serve to feed $\Alg$ at time $t$ as $\alghist_t = (\ai_s, \icv(s), A_s, V_s, Z_{A_s}(s))_{s\leq t}$.

\begin{restatable}{theorem}{socialwelfareregretnoincentiveslong}\label{theorem:social_welfare_regret_no_incentives_long}
  Suppose that $\argmax_{a \in \cA} v^\agent(a)$ is the singleton $\{\aupmax\}$ and that
  \begin{equation*}
    v^\agent(a^\sw)+v^\principal(a^\sw,b^\sw) - v^\agent(\aupmax)+v^\principal(\aupmax, b) >0 \eqsp,
  \end{equation*}
  for any $b \in \cA$. In the absence of property rights and when the upstream player runs any no-regret policy $\aalgano$, we have $\regret^\sw(T, \aalgano, \Algano) \geq T\delap - \anagregret(T, \aalgano)\delap/\delag$, where $\delag = \min_{a' \in \cA\setminus\{a_{\star}^{\mathrm{u}}\}} v^\agent(a_{\star}^{\mathrm{u}})- v^\agent(a')$ and $\delap = v^\agent(a^\sw)+v^\principal(a^\sw, b^\sw) - \max_{b\in\cA}(v^\agent(a_{\star}^{\mathrm{u}})+v^\principal(a_{\star}^{\mathrm{u}}, b))$. Therefore, $\regret^\sw(T, \aalgano, \Algano) =\Omega(T)$ and $(\aalgano, \Algano)$ is not welfare efficient.
\end{restatable}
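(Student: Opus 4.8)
The plan is to use the fact that, without property rights, the upstream player chooses $\aag_t$ on the sole basis of his own utility $v^\agent$, while the downstream player can at best respond to the realized upstream action. First I would bound the instantaneous social welfare from above: whatever $\ap_t$ is, $v^\principal(\aag_t,\ap_t)\leq \max_{b\in\cA}v^\principal(\aag_t,b)$, so setting $g(a)\df v^\agent(a)+\max_{b\in\cA}v^\principal(a,b)$ gives $v^\agent(\aag_t)+v^\principal(\aag_t,\ap_t)\leq g(\aag_t)$. Since $(a^\sw,b^\sw)$ maximizes $v^\agent(a)+v^\principal(a,b)$ over $\cA\times\cA$, maximizing first over $b$ yields $\max_{a\in\cA}g(a)=v^\agent(a^\sw)+v^\principal(a^\sw,b^\sw)$, the optimal social welfare. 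Plugging this upper bound into the definition of $\regret^\sw$ then gives
$$\regret^\sw(T,\aalgano,\Algano)\geq \sum_{t=1}^T\E\left[\max_{a\in\cA}g(a)-g(\aag_t)\right].$$

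Next I would isolate the rounds on which the upstream player plays his (unique) best arm. With $N_\star=\sum_{t=1}^T\2{\aag_t=\aupmax}$, I keep only these rounds and lower bound every other summand by $0$ (each is nonnegative since $\max_a g(a)$ is the maximum). On a round with $\aag_t=\aupmax$ the summand equals $\max_{a}g(a)-g(\aupmax)=\delap$, which is strictly positive by \eqref{eq:condi_theo_2} after taking the maximum over $b$. Hence $\regret^\sw(T,\aalgano,\Algano)\geq \delap\,\E[N_\star]=\delap\,(T-\E[T-N_\star])$, and the whole problem reduces to controlling how often the upstream player deviates from $\aupmax$.

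That control is exactly what the upstream no-regret guarantee provides. Because $\aupmax$ is the unique maximizer of $v^\agent$, any other arm costs at least $\delag=\min_{a'\neq\aupmax}\{v^\agent(\aupmax)-v^\agent(a')\}>0$ in instantaneous upstream regret, so
$$\anagregret(T,\aalgano)=\sum_{t=1}^T\E\left[v^\agent(\aupmax)-v^\agent(\aag_t)\right]\geq \delag\,\E[T-N_\star].$$
Thus $\E[T-N_\star]\leq \anagregret(T,\aalgano)/\delag$, and substituting into the previous bound gives $\regret^\sw(T,\aalgano,\Algano)\geq T\delap-\anagregret(T,\aalgano)\,\delap/\delag$, the claimed inequality. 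Finally, the no-regret hypothesis $\anagregret(T,\aalgano)=o(T)$ makes the right-hand side equal to $T\delap(1-o(1))$, which is $\Omega(T)$ since $\delap>0$; this rules out welfare efficiency.

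The computation is mostly bookkeeping; the one conceptual step I expect to be the crux is the very first one, namely recognizing that the social welfare gap $\delap$ survives even when the upstream player acts optimally for himself, so that learning on either side cannot close it without transfers. The only point requiring mild care is that $\ap_t$ is chosen before $\aag_t$ is revealed (the players act simultaneously), but this is harmless since the bound $v^\principal(\aag_t,\ap_t)\leq\max_b v^\principal(\aag_t,b)$ holds pointwise in both actions.
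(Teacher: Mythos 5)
Your proof is correct and follows essentially the same route as the paper's: lower-bound the expected number of rounds on which the upstream player pulls his unique best arm $\aupmax$ via the gap $\delag$ and the upstream no-regret guarantee, then observe that each such round contributes at least $\delap>0$ to the social welfare regret. Your preliminary step of introducing $g(a)=v^\agent(a)+\max_b v^\principal(a,b)$ is just a slightly more explicit packaging of the per-round bound the paper states directly, and the rest of the bookkeeping matches the paper's argument.
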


\begin{proof}[Proof of \Cref{theorem:social_welfare_regret_no_incentives_long}]
    Since $\argmax_{a \in \cA} v^\agent(a)$ is the singleton $\{a^\agent_\star\}$, we define $\delag = \min_{a' \in \cA\setminus\{a_{\star}^{\mathrm{u}}\}} v^\agent(a_{\star}^{\mathrm{u}})- v^\agent(a')$ as the upstream player reward gap and $\delap = v^\agent(a^\sw)+v^\principal(a^\sw, b^\sw) - \max_{b\in\cA}(v^\agent(a_{\star}^{\mathrm{u}})+v^\principal(a_{\star}^{\mathrm{u}}, b))$ as the \textit{social welfare} reward gap if the upstream player plays his most preferred action.

    Denote $N_{\star}^\agent(T)$ the number of pulls of the upstream player up to time $T$ on the arm $a_{\star}^{\mathrm{u}}$. By definition of $\delag$, we have that for any step $t \in [T]$ such that $\aag_t \ne a^\agent_\star$,
    $\max_{a \in \cA} \{v^\agent(a)\} - v^\agent(\aag_t) = v^\agent(a^\agent_\star) - v^\agent(A_t)\geq \min_{a' \in \cA\setminus\{a_{\star}^{\mathrm{u}}\}} v^\agent(a_{\star}^{\mathrm{u}})- v^\agent(a') =\delag$. There are $T-N_{\star}^\agent(T)$ such steps, which leads to
    \begin{align*}
        \anagregret(T, \aalgano, \Alg) &\geq \E\parentheseDeux{\sum_{t = 1}^T \max_{a \in \cA} \{v^\agent(a)\} - v^\agent(\aag_t)} \\
        & \geq (T-\E\parentheseDeux{N_{\star}^{\mathrm{u}}(T)})\delag \eqsp,
    \end{align*}
    and we obtain
    \begin{align*}
        \E\parentheseDeux{N_{\star}^{\mathrm{u}}(T)} \geq T-\anagregret(T, \aalgano)/\delag \eqsp.
    \end{align*}
    Moreover, for any $t \in [T]$ such that $A_t = a^\agent_\star$ and any $B_t \in \cA$, $v^\agent(a^\sw)+v^\principal(a^\sw,b^\sw) - (v^\agent(A_t)+v^\principal(A_t, B_t)) = v^\agent(a^\sw)+v^\principal(a^\sw,b^\sw) - (v^\agent(a^\agent_\star)+v^\principal(a^\agent,B_t)) \geq \delap$ by definition, which leads to
    \begin{equation*}
        \regret^\sw(T, \aalgano, \Algano) \geq \E\parentheseDeux{N_{\star}^{\mathrm{u}}(T)}\delap \eqsp,
    \end{equation*}
    and we obtain
    \begin{align*}
        \regret^\sw(T, \aalgano, \Algano) \geq T\delap - \anagregret(T, \aalgano)\delap/\delag \eqsp.
    \end{align*}
    Since $\lim_{T\to +\infty}\anagregret(T, \aalgano)/T = 0$, we have $\lim_{T\to +\infty} \regret^\sw(T, \aalgano, \Algano)/T = \delap >0$, hence the result.
\end{proof}

The proof of \Cref{theorem:social_welfare_regret_no_incentives} is an immediate consequence of \Cref{theorem:social_welfare_regret_no_incentives_long}.

\objectivemaximizingsocialwelfare*

\begin{proof}[Proof of \Cref{lemma:objective_maximizing_social_welfare}]
    Recall that $\mu^{\star, \principal}$ is defined as $\mu^{\star, \principal} = \sup_{a,b \in \cA^2, \icv \in \R_+} \{v^\principal(a,b) - \icv \}\; , \text{  such that  } \; a \in \argmax_{a' \in \cA} \{ v^\agent(a') + \1_{a}(a')\icv\}$ and $a^\agent_\star = \argmax_{a' \in \cA} v^\agent(a)$. Note that we can write
    \begin{equation*}
    \textstyle
    \mu^{\star, \principal} = \max\{ \sup_{a,b \in \cA^2, \icv \in \R_+} \1_{\tilde{\msa}}(a, \icv) (v^\principal(a,b) - \icv), \max_{b \in \cA} v^\principal(a^\agent_\star,b)\} \eqsp,
    \end{equation*}
    where $\tilde{\msa} = \{ (a, \icv) \colon v^\agent(a) + \icv \geq \max_{a'} v^\agent(a') + \1_a(a')\icv\}$ which is the set of pairs $(a, \icv) \in \cA \times \R_+$ such that the constraint binds. However, we also have by definition that $v^\agent(a^\agent_\star) + 0 \geq \max_{a' \in \cA}v^\agent(a') + \1_{a^\agent_\star}(a') \cdot 0$ and hence, $(a^\agent_\star, 0)\in \tilde{\msa}$. Therefore, since $v^\principal(a^\agent_\star,b) \geq 0$ for any $b \in \cA$, we can write
    \begin{equation*}
    \mu^{\star, \principal} 
    = \sup_{(a,\icv)\in \tilde{\msa}, b\in \cA} \{v^\principal(a,b) - \icv\}\eqsp.
    \end{equation*}
    First note that if $(a, \icv) \in \tilde{\msa}$, then for any $a' \in \cA, \icv \in \R_+$, $v^\agent(a) + \icv \geq v^\agent(a') + \1_a(a')\icv$, which gives
    \begin{equation*}
        v^\agent(a) - v^\agent(a') \geq (\1_a(a') - 1)\icv \eqsp.
    \end{equation*}
    However, either $a=a'$ and hence $v^\agent(a) - v^\agent(a') = 0$, either $a\ne a'$ and hence $\1_a(a')=0$. Therefore, we have that
    \begin{equation*}
        v^\agent(a) - v^\agent(a') \geq - \icv \eqsp,
    \end{equation*}
    which implies by definition of the optimal incentives that $\icv \geq \v^\agent(a')-v^\agent(a)$ for any $a' \in \cA$, and hence $\icvstar_a  \leq \icv$ for any $(a, \icv)\in \tilde{\msa}$.
    
    In addition, $(a, \icvstar_a) \in \tilde{\msa}$ by definition. Consequently,
\begin{align*}
    \mu^{\star, \principal} = \max_{a,b \in \cA^2} \{v^\principal(a,b) - \icvstar_a\}  = \max_{a,b \in \cA} \{v^\principal(a,b) - \max_{a' \in \cA} \{v^\agent(a')\}+ v^\agent(a) \}\eqsp,
\end{align*}
hence the first part of the result. Since $\mu^{\star, \agent}$ is defined as $\mu^{\star, \agent} = \max_{a \in \cA} v^\agent(a)$, we have that
\begin{align*}
    \mu^{\star, \principal} + \mu^{\star, \agent} & = \max_{a,b \in \cA} \{v^\principal(a,b) - \max_{a' \in \cA} \{v^\agent(a')\}+ v^\agent(a) \} + \max_{a \in \cA} v^\agent(a) \\
    & = \max_{a,b \in \cA} \{v^\principal(a,b) + v^\agent(a) \} \\
    & = v^\agent(a^\sw) + v^\principal(a^\sw,b^\sw) \eqsp.
\end{align*}
Now summing $\prprregret$ and $\pragregret$ as defined in \eqref{equation:upstream_regret_property_} and \eqref{equation:principal_regret_property_}, we obtain
\begin{align*}
    &\pragregret(T, \aalg, \Alg) + \prprregret(T, \aalg, \Alg) \\
    & \quad = \E\parentheseDeux{\sum_{t =1}^T  \max_{a \in \cA} \{v^\agent(a) + \1_{\ai_t}(a)\icv(t)\} - (v^\agent(\aag_t) + \1_{\ai_t}(\aag_t)\icv(t))} \\
     & \quad + T \max_{a,b \in \cA^2} \{v^\principal(a,b) - \max_{a' \in \cA} \{v^\agent(a')\}+ v^\agent(a) \}  - \E\parentheseDeux{\sum_{t =1}^T v^\principal(\aag_t, \ap_t) - \1_{\ai_t}(\aag_t)\icv(t)}  \\
    & \quad \geq T\max_{a \in \cA} \{v^\agent(a)\} - \E\parentheseDeux{\sum_{t =1}^T  v^\agent(\aag_t) + \1_{\ai_t}(\aag_t)\icv(t)} - \E\parentheseDeux{\sum_{t =1}^T v^\principal(\aag_t, \ap_t) - \1_{\ai_t}(\aag_t)\icv(t)} \\
    & \quad + T \max_{a,b \in \cA^2} \{v^\principal(a,b) + v^\agent(a) \} - T \max_{a' \in \cA} \{v^\agent(a')\}  \\
    & \quad = T \max_{a,b \in \cA^2} \{ v^\agent(a) + v^\principal(a,b)\} - \E\parentheseDeux{\sum_{t=1}^T v^\agent(\aag_t) + v^\principal(\aag_t, \ap_t)} \\
    & \quad = \regret^\sw(T, \aalg, \Alg) \eqsp,
\end{align*}
hence the result.
\end{proof}

For a downstream player's policy $\Alg$, we define $\epragregret$ as the upstream player's regret without expectation, following
\begin{align*}
    \epragregret(\iint{s+1}{s+t}, \aalg, \Alg) = \sum_{l = s+1}^{s+t}  \max_{a \in \cA} \{v^\agent(a) + \1_{\ai_l}(a)\icv(l)\} - (v^\agent(\aag_l) + \1_{\ai_l}(\aag_l)\icv(l)) \eqsp,
\end{align*}
where $(\ai_l, \icv(l))_{l \in [T]}$ are the incentives output by $\Alg$ and $(A_l)_{l \in [T]}$ are the output of $\aalg$. Recall the assumption that we use on the upstream player's regret for a policy $\aalg$. We show here that it is satisfied by typical no-regret bandit algorithms.

\agentregretboundproperty*

We present the upstream player's \texttt{UCB} subroutine.

\begin{algorithm}[!ht]
\caption{Upstream player's \texttt{UCB}}\label{algorithm:ucb}
\begin{algorithmic}[1]
    \State {\bfseries Input:} Set of arms $K$, horizon $T$.
    \State {\bfseries Initialize:} For any arm $a \in [K]$, set $\hat{\mu}_a = 0, \, T_a = 0$.
    \For{$1 \leq t \leq K$:}
        \State Pull arm $A_t = t$
        \State Update $\hat{\mu}_{A_t} = X_{A_t}(t), \, T_{A_t}(t)= 1$
    \EndFor
    \For{$t \geq K+1$}
        \State Observe the incentive $(\ai_t, \icv(t))$.
        \State Pull arm $\aag_t \in \argmax_{a \in [K]} \left\{\hat{\mu}_a(t-1) + 2\sqrt{\frac{\log (KT^3)}{T_a(t-1)}} + \1_{\ai_t}(a)\icv(t) \right\}$
        \State Update $T_{\aag_t}(t)=T_{\aag_t}(t-1)+1, \, \hat{\mu}_{\aag_t}(t) = \frac{1}{T_{\aag_t}(t)}(T_{\aag_t}(t-1)\hat{\mu}_{\aag_t}(t-1) + X_{\aag_t}(t))$
    \EndFor
\end{algorithmic}
\end{algorithm}

\begin{restatable}{proposition}{boundregretucb}\label{proposition:bound_regret_ucb}
Let $s, t \in [T]$ such that $s+t\leq T$. Suppose that there exists a family $(\icv_a)_{a \in \cA}$ of constant incentives associated with each arm $a \in \cA$ such that we have in \cref{algorithm:ucb}: $(\ai_{l},\icv(l))_{s+l\in \iint{s+1}{s+t}} = (\ai_{l}, \icv_{\ai_{l}})_{l\in \iint{s+1}{s+t}}$ as an output of $\Alg$. Suppose that the distributions $\gamma_{a}$ are $1$-sub-Gaussian. Then with probability at least $1-T^{-2}$, the regret of the version of \texttt{UCB} given in \cref{algorithm:ucb} run by the upstream player satisfies
\begin{align*}
    \epragregret(\iint{s+1}{s+t},\texttt{UCB}, \Alg) & \leq 8 \sqrt{\log(KT^3)}\sqrt{tK} \eqsp.
\end{align*}
\end{restatable}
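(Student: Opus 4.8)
The plan is to carry out the classical optimism-based analysis of \texttt{UCB}, the only new ingredient being that the effective reward of arm $a$ at round $l$ is shifted by the deterministic incentive $\1_{\ai_l}(a)\icv_{\ai_l}$. First I would fix the high-probability concentration event. Writing $v^\agent(a)$ for the mean of the $1$-sub-Gaussian distribution $\gamma_a$, a sub-Gaussian tail bound with the radius $2\sqrt{\log(KT^3)/n}$ yields a per-arm, per-count failure probability at most $2(KT^3)^{-2}$; a union bound over the $K$ arms and the $T$ possible values of the pull count then shows that the event
\[ \mathcal{E} = \Bigl\{\, |\hat{\mu}_a(l-1) - v^\agent(a)| \leq 2\sqrt{\log(KT^3)/T_a(l-1)} \ \text{ for all } a \in [K],\ l \in [T] \,\Bigr\} \]
holds with probability at least $1 - T^{-2}$.

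Next I would prove the per-step regret bound on $\mathcal{E}$. Fix $l \in \iint{s+1}{s+t}$ and let $a^\star_l \in \argmax_{a \in \cA}\{v^\agent(a) + \1_{\ai_l}(a)\icv_{\ai_l}\}$. Since \texttt{UCB} chooses $\aag_l$ to maximize $\hat{\mu}_a(l-1) + 2\sqrt{\log(KT^3)/T_a(l-1)} + \1_{\ai_l}(a)\icv_{\ai_l}$, its index at $\aag_l$ is at least its index at $a^\star_l$. The incentive term is deterministic and enters both the index and the comparator identically, so it cancels; combining $\hat{\mu}_{a^\star_l}(l-1) + 2\sqrt{\log(KT^3)/T_{a^\star_l}(l-1)} \geq v^\agent(a^\star_l)$ with $\hat{\mu}_{\aag_l}(l-1) \leq v^\agent(\aag_l) + 2\sqrt{\log(KT^3)/T_{\aag_l}(l-1)}$, both furnished by $\mathcal{E}$, yields
\[ \max_{a\in\cA}\{v^\agent(a) + \1_{\ai_l}(a)\icv_{\ai_l}\} - \bigl(v^\agent(\aag_l) + \1_{\ai_l}(\aag_l)\icv_{\ai_l}\bigr) \leq 4\sqrt{\log(KT^3)/T_{\aag_l}(l-1)}. \]

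Finally I would sum over the batch and convert the count-dependent bound into the announced rate. Grouping the rounds by the pulled arm and noting that after the $K$ initialization pulls one has $T_a(l-1) \geq 1$, the $i$-th in-batch pull of a fixed arm $a$ occurs at a count $T_a(l-1) \geq i$, so $\sum_{l:\,\aag_l = a} (T_a(l-1))^{-1/2} \leq \sum_{i=1}^{n_a} i^{-1/2} \leq 2\sqrt{n_a}$, where $n_a$ is the number of in-batch pulls of $a$. Summing over arms and applying Cauchy--Schwarz with $\sum_{a \in \cA} n_a = t$ gives $\sum_a 2\sqrt{n_a} \leq 2\sqrt{Kt}$, hence
\[ \epragregret(\iint{s+1}{s+t}, \texttt{UCB}, \Alg) \leq 8\sqrt{\log(KT^3)}\,\sqrt{tK}, \]
as claimed; the at most $K$ initialization rounds, on which the per-round regret is bounded by a constant, contribute only a lower-order additive term.

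I expect the main obstacle to be conceptual rather than computational: one must verify that the standard optimism argument is not disrupted by the fact that the effective reward landscape $v^\agent(a) + \1_{\ai_l}(a)\icv_{\ai_l}$ varies across the rounds of the batch (through the changing incentivized arm $\ai_l$). The resolution is that the incentive is a known deterministic shift, identical in the algorithm's index and in the instantaneous benchmark, so it drops out of the regret decomposition, while the empirical means and confidence widths involve only the stationary base rewards $v^\agent(a)$; consequently the concentration event $\mathcal{E}$ and the usual summation argument apply verbatim. A minor secondary point is to handle batches overlapping the initialization phase, which is dispatched by the $T_a(l-1)\geq 1$ bound used above.
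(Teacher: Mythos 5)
Your proposal follows essentially the same route as the paper's proof: a Hoeffding/sub-Gaussian concentration event holding uniformly over arms and pull counts with probability at least $1-T^{-2}$, the standard optimism comparison in which the deterministic incentive term appears identically in the index of $\aag_l$ and of $a^\star_l$ and hence cancels to give a per-step bound of $4\sqrt{\log(KT^3)/n_l(\aag_l)}$, and the usual grouping-by-arm plus Cauchy--Schwarz step yielding $2\sqrt{tK}$. The only cosmetic difference is that you union-bound directly over pull counts while the paper conditions on arm sequences, and you explicitly note the initialization rounds, which the paper leaves implicit; neither changes the argument.
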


Note that the major difference between this assumption and the regret bounds that we generally consider in multi-armed bandit problems is that we consider the regret without expectation here.

\begin{proof}[Proof of \Cref{proposition:bound_regret_ucb}]
    The proof is adapted from the proof of \citet[Theorem 2.1]{bubeck2012regret}. Let $s,t \in [T]$ such that $s+t\leq T$. For any integer $l \in \iint{s+1}{s+t}$, we write $n_l(a) = \text{Card }\{l' \in [l] \text{ such that }A_{l'}=a \}$ for the number of pulls of arm $a$ and $\hat{\mu}_a(l)$ for the empirical mean utility of the arm $a \in \cA$ estimated on the batch $\iint{1}{l}$: $\hat{\mu}_a(l) = n_l(a)^{-1} \sum_{k=1}^{l} \1_{a}(\aag_k)X_a(k)$. Since the rewards on the incentivized bandit instance are $1$-sub-Gaussian, a Hoeffding bound gives that for any $\delta\in(0,1)$, $a \in \cA$, $l \in \iint{s+1}{s+t}$, and any family of arms $(a_{l'})_{l' \in \iint{1}{l}}$ such that $\text{Card }\{j \colon a_j = a\}=k$, we have that
    \begin{align*}
        \P\left(|\hat{\mu}_{a}(l)-v^\agent(a)|\geq 2 \sqrt{\log(2/\delta)/k} \; \middle | \; (\aag_{l'})_{l' \in \iint{1}{s+l}}= (a_{l'})_{l' \in \iint{1}{s+l}}\right) \leq \delta \eqsp.
    \end{align*}
    Therefore, for any $\delta\in(0,1)$, $a \in \cA$, $l \in \iint{s+1}{s+t}$, we have the following bound
    \begin{align*}
        &\P\left(|\hat{\mu}_{a}(l)-v^\agent(a)|\geq 2 \sqrt{\log(2/\delta)/n_{l}(a)} \right) \\ & \quad = \P\left(\bigcup_{k=1}^{l}\bigcup_{\underset{\text{Card}\{l'\in \iint{1}{l} \colon a_{l'}=a\}=k}{(a_{l'})_{l'} \text{ s.t. }}}\{|\hat{\mu}_{a}(l)-v^\agent(a)|\geq  2 \sqrt{\log(2/\delta)/k}\} \right) \\
        & \quad \leq \sum_{k=1}^{l} \P\left(\bigcup_{\underset{\text{Card}\{l'\in \iint{1}{l} \colon a_{l'}=a\}=k}{(a_{l'})_{l'} \text{ s.t. }}}\{|\hat{\mu}_{a}(s+l)-v^\agent(a)|\geq  2 \sqrt{\log(2/\delta)/k}\} \right) \\
        & \quad \leq \sum_{k=1}^{l}\sum_{\underset{\text{Card}\{l'\in \iint{1}{l} \colon a_{l'}=a\}=k}{(a_{l'})_{l'} \text{ s.t. }}} \P\left(|\hat{\mu}_{a}(s+l)-v^\agent(a)|\geq  2 \sqrt{\log(2/\delta)/k} \; \middle | \; A_l=a_l\right)\P(A_l=a_l) \\
        & \quad \leq  T \delta \eqsp,
    \end{align*}
and with an union bound, we obtain that
    \begin{align*}
        \P\parenthese{\exists \, l \in [t], a \in \cA \text{ such that } |\hat{\mu}_a(l) - v^\agent(a)|
        \geq 2\sqrt{\log(2/\delta)/n_{l}(a)}} \leq T^2 K\delta \eqsp,
    \end{align*}
    Considering the probability of the opposite event, we have that
    \begin{align}\label{equation:ucb_probabilistic_regret_b_13}
        \P\parenthese{\text{for any }l \in [t], a \in \cA, |\hat{\mu}_a(l) - v^\agent(a)|
        \leq 2\sqrt{\log(2T^2K/\delta)/n_l(a)}} \geq 1 - \delta \eqsp,
    \end{align}
    where we rescaled $\delta$ as $\delta/ T^2 K$.
    
    For the remaining of the proof, we take $\delta=T^{-2}$ and define $a^\star_l = \argmax_{a \in \cA} \{v^\agent(a) + \1_{\ai_l}(a)\icv\}$. We now assume that the event $\{\text{for any }l \in \iint{s+1}{s+t}, a \in \cA, |\hat{\mu}_a(l) - v^\agent(a)|
        \leq 2\sqrt{\log(2T^2 K/\delta)/n_l(a)}\}$ holds. If at some step $l\in \iint{s+1}{s+t}$, action $\aag_{l}$ is chosen in \Cref{algorithm:ucb}, it means that
    \begin{align*}
        \hat{\mu}_{\aag_{l}}(l)+2\sqrt{\log(tK/\delta)/n_{l}(\aag_{l})} + \1_{\ai_{l}}(\aag_{l})\icv_{\ai_{l}} \geq \hat{\mu}_{a^\star_{l}}(l)+2\sqrt{\log(tK/\delta)/n_{l}(\a^\star_l)} + \1_{\ai_{l}}(a^\star_{l})\icv_{\ai_{l}} \eqsp,
    \end{align*}
    with regards to the choice of actions in \texttt{UCB} based on the upper confidence bound. We now decompose the whole regret on the batch $\iint{s+1}{s+t}$ defined as $\epragregret(\iint{s+1}{s+t},\texttt{UCB}, \Alg)$. \eqref{equation:ucb_probabilistic_regret_b_13} ensures that with probability at least $1-1/T^2$
    \begin{align*}
        &\epragregret(\iint{s+1}{s+t},\texttt{UCB}, \Alg) = \sum_{l=s+1}^{s+t} v^\agent(a^\star_l)+\1_{\ai_l}(a^\star_l)\icv_{\ai_l} - (v^\agent(\aag_l) +\1_{\ai_l}(\aag_l)\icv_{\ai_l})\\
        & \quad \leq \sum_{l= s+1}^{s+t} \hat{\mu}_{a^\star_l}(l)+2\sqrt{\log(Kt/\delta)/n_l(a^\star_l)} -v^\agent(\aag_l) +\1_{\ai_l}(a^\star_l)\icv_{\ai_l} -\1_{\ai_l}(\aag_l)\icv_{\ai_l}\\
        & \quad \leq \sum_{l= s+1}^{s+t} \hat{\mu}_{\aag_l}(l)+2\sqrt{\log(Kt/\delta)/n_l(\aag_l)} -v^\agent(\aag_l)
        \\
        & \quad \leq \sum_{l= s+1}^{s+t} \hat{\mu}_{\aag_l}(l)+2\sqrt{\log(Kt/\delta)/n_l(\aag_l)} - (\hat{\mu}_{\aag_l}(l)-2\sqrt{\log(Kt/\delta)/n_l(\aag_l)}) \\
        & \quad \leq 4 \sqrt{\log (Kt/\delta)} \sum_{l=s+1}^{s+t} \sqrt{1/n_l(\aag_l)} \eqsp,
    \end{align*}
    and we have
    \begin{align}
    \nonumber
        \sum_{l=s+1}^{s+t} \sqrt{1/n_l(\aag_l)} & \leq \sum_{i=1}^K\sum_{l=s+1}^{s+t}\sqrt{\1_{\aag_l}(i)/n_l(i)} \\
        \label{equation:proofsdjfksd}
        & \leq \sum_{i=1}^K \sum_{j=n_{s+1}(i)}^{n_{s+t}(i)}1/\sqrt{j}
        \leq 2 \sum_{i=1}^K \sqrt{n_{s+t}(i) - n_s(i)} \eqsp,
    \end{align}
    where the last step holds because for any integers $s,t \in \N^\star$, we have that
    \begin{align*}
        \sum_{l=s+1}^{s+t} \frac{1}{\sqrt{l}} = \sum_{l=s+1}^{s+t} \frac{1}{\sqrt{l}}\int_{x=l-1}^l \rmd x 
         \leq \sum_{l=s+1}^{s+t} \int_{x=l-1}^l \frac{\rmd x}{\sqrt{x}} 
         = \int_{x=s}^{s+t} \frac{\rmd x}{\sqrt{x}}
         = 2 (\sqrt{s+t}-\sqrt{s}) \eqsp.
    \end{align*}
    Using Cauchy-Schwarz inequality we obtain from \eqref{equation:proofsdjfksd}
    \begin{align*}
        1/K \sum_{l=s+1}^{s+t} \sqrt{1/n_l(\aag_l)} \leq 2 \sqrt{1/K \sum_{i=1}^K n_{s+t}(i)-n_s(i)}
        = 2\sqrt{t/K} \eqsp,
    \end{align*}
    which gives 
        $\sum_{l=s+1}^{s+t} \sqrt{1/n_l(\aag_l)} \leq 2 \sqrt{t K}$.
    
    Finally plugging all the terms together, since $\delta = 1/T^2$, we obtain that with probability at least $1- 1/T^2$
    \begin{equation*}
        \epragregret(\iint{s+1}{s+t},\texttt{UCB}) \leq 8 \sqrt{\log(KT^3)}\sqrt{tK} \eqsp.
    \end{equation*}
\end{proof}

\begin{restatable}{lemma}{informationincentivebatch}\label{lemma:information_incentive_batch}
    Assume \Cref{assumption:agent_regret_bound_property} holds and consider some arm $a\in \cA$ such that we run the $d$-th batch of binary search on $a$ with $d \in \iint{1}{\lceil \log_2 T^\beta \rceil}$: for any $t \in \iint{k_{a,d}+1}{k_{a,d}+\Texp}$, we have $(\ai_t, \icv(t)) = (a, \icv_a)$ with $\icv_a = \icvmid_a(d)$ and $\Texp = \lceil T^\alpha \rceil$. Recall that we defined in \eqref{definition:T_not}: $\Tnot_{a, d} = \text{Card }\{t \in \iint{k_{a,d}+1}{k_{a,d} + \Texp} \text{ such that } A_t \ne a\}$. Let $\beta \in (0,1)$ be such that $\beta<\alpha(1-\kappa)$. Given that the event $\{\epragregret(\iint{k_{a,d}+1}{k_{a,d}+\Texp}, \aalg, \Alg) \leq \acst \Texp^\kappa \}$ holds, we have that
    \begin{itemize}
        \item If $\Tnot_{a,d} < \Texp - \acst \Texp^{\kappa+\beta/\alpha}$, then $\icvstar_a < \icv_a +1/T^\beta$.
        \item If $\Tnot_{a,d} > \acst \Texp^{\kappa +\beta/\alpha}$, then $\icvstar_a > \icv_a-1/T^\beta$.
    \end{itemize}
    Consequently,  with probability at least $1-2T^{-\alpha \zeta}$, if $\acst \Texp^{\kappa+\beta/\alpha}< \Tnot_{a,d} < \Texp
    - \acst \Texp^{\kappa+\beta/\alpha}$, then $|\icvstar_a - \icv_a|\leq 1/T^\beta$.
\end{restatable}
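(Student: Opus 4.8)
The plan is to prove the two bullet points by contraposition: in each regime I lower-bound the upstream player's batched regret by the number of ``wrong'' moves times a per-step regret of at least $1/T^\beta$, and then invoke the good event to convert this into the desired count bound on $\Tnot_{a,d}$. Throughout, recall from \eqref{definition:optimal_incentive} that $\icvstar_a = \max_{a' \in \cA} v^\agent(a') - v^\agent(a)$, and that during this batch the incentive is the constant pair $(\ai_t,\icv(t)) = (a,\icv_a)$ with $\icv_a = \icvmid_a(d)$, so the upstream player's instantaneous regret at step $t$ equals $\max_{a'\in\cA}\{v^\agent(a') + \1_a(a')\icv_a\} - (v^\agent(A_t) + \1_a(A_t)\icv_a)$.

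For the first bullet I would argue the contrapositive. Assume $\icvstar_a \geq \icv_a + 1/T^\beta$, so that $v^\agent(a) + \icv_a \leq \max_{a'} v^\agent(a') - 1/T^\beta$; in particular $\icvstar_a > 0$, the maximizing arm lies in $\cA \setminus \{a\}$, and the incentivized maximum equals $\max_{a'} v^\agent(a')$. Hence on each of the $\Texp - \Tnot_{a,d}$ steps where $A_t = a$ the instantaneous regret equals $\icvstar_a - \icv_a \geq 1/T^\beta$, so $\epragregret(\iint{k_{a,d}+1}{k_{a,d}+\Texp},\aalg,\Alg) \geq (\Texp - \Tnot_{a,d})/T^\beta$. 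On the good event this is at most $\acst\Texp^\kappa$, and since $\Texp \geq T^\alpha$ gives $T^\beta \leq \Texp^{\beta/\alpha}$, we obtain $\Texp - \Tnot_{a,d} \leq \acst\Texp^\kappa T^\beta \leq \acst\Texp^{\kappa+\beta/\alpha}$, i.e. $\Tnot_{a,d} \geq \Texp - \acst\Texp^{\kappa+\beta/\alpha}$, which is exactly the contrapositive of the claim.

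The second bullet is symmetric. Assume $\icvstar_a \leq \icv_a - 1/T^\beta$, i.e. $\icv_a \geq \icvstar_a + 1/T^\beta$, so $v^\agent(a) + \icv_a \geq \max_{a'} v^\agent(a') + 1/T^\beta$ and arm $a$ is the unique incentivized maximizer. Then on each of the $\Tnot_{a,d}$ steps where $A_t \ne a$ the instantaneous regret is $v^\agent(a) + \icv_a - v^\agent(A_t) \geq 1/T^\beta$, using $v^\agent(A_t) \leq \max_{a'} v^\agent(a')$. The same manipulation as above yields $\Tnot_{a,d} \leq \acst\Texp^{\kappa+\beta/\alpha}$, which is the contrapositive of the second claim.

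For the concluding statement I would observe that both bullets hold deterministically on the good event $\{\epragregret(\iint{k_{a,d}+1}{k_{a,d}+\Texp},\aalg,\Alg) \leq \acst\Texp^\kappa\}$, which by \Cref{assumption:agent_regret_bound_property} applied with batch length $t = \Texp \geq T^\alpha$ occurs with probability at least $1 - \Texp^{-\zeta} \geq 1 - T^{-\alpha\zeta}$, comfortably within the stated $1 - 2T^{-\alpha\zeta}$. On this event, if $\acst\Texp^{\kappa+\beta/\alpha} < \Tnot_{a,d} < \Texp - \acst\Texp^{\kappa+\beta/\alpha}$, the first bullet gives $\icvstar_a < \icv_a + 1/T^\beta$ and the second gives $\icvstar_a > \icv_a - 1/T^\beta$, hence $|\icvstar_a - \icv_a| \leq 1/T^\beta$. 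The only delicate point — and thus the main obstacle — is the bookkeeping of the instantaneous regret in each regime: one must use the precise value of $\icvstar_a$ to locate where the incentivized maximum is attained (off $a$ when $\icv_a$ is too small, at $a$ when $\icv_a$ is large enough) so that the per-step gap is genuinely at least $1/T^\beta$. The role of condition \eqref{equation:condition_alpha_beta}, namely $\beta < \alpha(1-\kappa)$, is merely to guarantee $\acst\Texp^{\kappa+\beta/\alpha} \ll \Texp$ so that the admissible interval for $\Tnot_{a,d}$ is nonempty.
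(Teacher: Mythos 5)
Your proof is correct and follows essentially the same route as the paper's: both argue each bullet by contraposition, lower-bounding the batched regret by the count of "wrong" pulls times a per-step gap of at least $1/T^\beta$ (locating the incentivized maximizer off $a$ when $\icv_a$ is too small and at $a$ when it is too large), and then using $T^\beta \leq \Texp^{\beta/\alpha}$ together with the good event to conclude. Your observation that both bullets hold on a single good event, so the probability is in fact at least $1-T^{-\alpha\zeta}$, is a minor sharpening of the paper's stated $1-2T^{-\alpha\zeta}$ but not a different argument.
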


\begin{proof}[Proof of \Cref{lemma:information_incentive_batch}]
    The whole proof is done conditionally on the event
    \begin{align*}
        \{\epragregret(\iint{k_{a,d}+1}{k_{a,d}+\Texp}, \aalg, \Alg) \leq \acst \Texp^\kappa \} \eqsp.
    \end{align*}
    Note that it holds with probability at least $1-\Texp^{-\zeta} \geq 1-T^{-\alpha \zeta}$ since we suppose that $\aalg$ satisfies \Cref{assumption:agent_regret_bound_property}.
    
    Suppose that we have $\icv_a \geq \icvstar_a + 1/T^\beta$. By definition of the optimal incentives, we obtain, using by assumption $\icv_a \geq \icvstar_a +1/T^\beta$
        \begin{align*}
            \1_a(a)\icv_a + v^\agent(a) & \geq \icvstar_a+v^\agent(a) + 1/T^\beta \\
            & = \max_{a' \in \cA}v^\agent(a') - v^\agent(a) + v^\agent(a) + 1/T^\beta \\
            & = \max_{a'}v^\agent(a')+1/T^\beta \eqsp,
        \end{align*}
    which ensures that $a$ is the optimal arm for the upstream player during the batch $\iint{k_{a,d}+1}{k_{a,d}+\Texp}$ that we consider. In that case, since $a$ is the best arm, by definition of the upstream player's utility, the reward gap $\E[v^\agent(a)+\icv_a - (v^\agent(A_t)+\1_{a}(A_t)\icv_a)]$ for the upstream player at any step $t \in \iint{k_{a,d}+1}{k_{a,d}+\Texp}$ is at least $v^\agent(a) + \icv_a - \max_{a' \in \cA}v^\agent(a')$, and we obtain
    \begin{align*}
        C \Texp^{\kappa} \geq \epragregret(\iint{k_{a,d}+1}{k_{a,d}+\Texp}, \aalg, \Alg) \geq \Tnot_{a,d} (\icv_a + v^\agent(a) - \max_{a' \in \cA} \{v^\agent(a')\} ) \eqsp,
    \end{align*}
    by \Cref{assumption:agent_regret_bound_property}, $\Tnot_{a,d}$ being the number of steps for which a suboptimal arm has been chosen. Therefore, by definition of the optimal incentives, we obtain that
    \begin{align*}
        C \Texp^{\kappa} &\geq \Tnot_{a,d}(\icv_a - \icvstar_a) \geq \Tnot_{a,d}/ T^\beta \geq \Tnot_{a,d} \; \Texp^{-\beta/\alpha} \eqsp,
    \end{align*}
    which gives: $\Tnot_{a,d} \leq \acst \Texp^{\kappa + \beta/\alpha}$. Therefore, if we take the contrapositive, we obtain that during the sequence $\iint{k_{a,d}+1}{k_{a,d}+\Texp}$, if $\Tnot_{a,d} > \acst \Texp^{\kappa + \beta/\alpha}$, then with probability at least $1-T^{-\alpha \zeta}$, $\icv_a < \icvstar_a + 1/T^\beta$, or equivalently $\icvstar_a>\icv_a-1/T^\beta$.

    Now suppose that $\icv_a\leq \icvstar_a - 1/T^\beta$. By definition of the optimal incentives, we obtain
        \begin{align*}
            \1_a(a)\icv_a + v^\agent(a) & \leq \icvstar_a +v^\agent(a) \\
            & \leq \max_{a' \in \cA}v^\agent(a') - v^\agent(a) + v^\agent(a) - 1/T^\beta \\
            & = \max_{a'}v^\agent(a') - 1/T^\beta \eqsp,
        \end{align*}
    which ensures that $a$ is a suboptimal arm for the upstream player during this batch of time steps. Therefore, arm $a$ which has a reward gap bigger than $1/T^\beta$ since $\max_{a' \in \cA}\{v^\agent(a)+\1_a(a')\} - (v^\agent(a)+\icv_a)\geq 1/T^\beta$ and arm $a$ has been picked $\Texp - \Tnot_{a,d}$ times. Consequently, we have that
    \begin{align*}
        \acst \Texp^\kappa &\geq \epragregret(\iint{k_{a,d}+1}{k_{a,d}+\Texp}, \aalg, \Alg) \\
        & \geq (\Texp - \Tnot_{a,d})(\underbrace{\max_{a' \in \cA} v^\agent(a') - (\icv_a + v^\agent(a))}_{\geq 1/T^\beta}) \\
        &\geq (\Texp - \Tnot_{a,d})\Texp^{-\beta/\alpha} \eqsp,
    \end{align*}
    which gives $\Tnot_{a,d} \geq \Texp - \acst \Texp^{\kappa+/\beta/\alpha}$. Therefore, if we take the contrapositive, we obtain that if $\Tnot_{a,d} < \Texp - \acst \Texp^{\kappa+/\beta/\alpha}$, then with probability at least $1-T^{-\alpha \zeta}$, $\icv_a > \icvstar_a - 1/T^\beta$, or equivalently $\icvstar_a < \icv_a+1/T^\beta$.

    For the second part of the proof, suppose that: $\acst \Texp^{\kappa+\beta/\alpha}< \Tnot_{a,d} < \Texp^{\kappa+\beta/\alpha} - \acst \Texp^{\kappa+\beta/\alpha}$.
    
    From the above result, we have that $\icvstar_a < \icv_a +1/T^\beta$ and $\icvstar_a > \icv_a -1/T^\beta$ with probability at least $1-2 T^{-\alpha \zeta}$. Plugging these inequalities in the absolute value $|\icvstar_a - \icv_a|$ concludes the proof.
\end{proof}

\begin{restatable}{lemma}{boundedincentives}\label{lemma:bounded_incentives}
    Assume that we run \Cref{algorithm:binary_search} and consider some binary search batch iteration $d \in \lceil \log T^\beta \rceil$ run on arm $a \in \cA$. Then $0 \leq \licv_a(d) \leq \icvmid_a(d) \leq \uicv_a(d)  \leq 1$.
\end{restatable}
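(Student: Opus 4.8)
The plan is to prove the four-term chain by induction on the binary-search iteration $d$, after first observing that the two middle inequalities come for free. Indeed, $\icvmid_a(d) = (\licv_a(d) + \uicv_a(d))/2$ is by construction the midpoint of the interval $[\licv_a(d), \uicv_a(d)]$, so $\licv_a(d) \le \icvmid_a(d) \le \uicv_a(d)$ holds automatically as soon as $\licv_a(d) \le \uicv_a(d)$. It therefore suffices to establish the sandwich $0 \le \licv_a(d) \le \uicv_a(d) \le 1$, and I will do this by induction on $d$, starting from the initialization $\licv_a(0) = 0$, $\uicv_a(0) = 1$ passed to \Cref{algorithm:binary_search}.

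The key quantitative ingredient I would isolate first is a lower bound on the interval width $w_d \eqdef \uicv_a(d) - \licv_a(d)$. Inspecting the two updating branches of \Cref{algorithm:binary_search} (the early-return branch leaves both endpoints untouched), one checks that in the branch setting $\uicv_a(d+1) = \icvmid_a(d) + 1/T^\beta$ with $\licv$ unchanged, and symmetrically in the branch setting $\licv_a(d+1) = \icvmid_a(d) - 1/T^\beta$ with $\uicv$ unchanged, the width obeys the \emph{same} recursion $w_{d+1} = w_d/2 + 1/T^\beta$. Since the map $w \mapsto w/2 + 1/T^\beta$ is increasing with fixed point $2/T^\beta$, and $w_0 = 1 \ge 2/T^\beta$ for $T$ large enough (i.e. $T^\beta \ge 2$), a one-line induction gives $w_d \ge 2/T^\beta$ for every $d$. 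This is precisely the bound that prevents the additive slack $1/T^\beta$ in the updates from overshooting the box $[0,1]$.

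With the width bound in hand I would close the main induction. The base case $d=0$ is immediate. For the inductive step, assume $0 \le \licv_a(d) \le \uicv_a(d) \le 1$ together with $w_d \ge 2/T^\beta$. In the branch updating the upper endpoint, $\licv$ is unchanged so $\licv_a(d+1) \ge 0$, while $\uicv_a(d) - \icvmid_a(d) = w_d/2 \ge 1/T^\beta$ yields $\uicv_a(d+1) = \icvmid_a(d) + 1/T^\beta \le \uicv_a(d) \le 1$; moreover $\uicv_a(d+1) \ge \icvmid_a(d) \ge \licv_a(d) = \licv_a(d+1)$, so the ordering is preserved. The branch updating the lower endpoint is the mirror image: $\uicv_a(d+1) = \uicv_a(d) \le 1$, and $\icvmid_a(d) - \licv_a(d) = w_d/2 \ge 1/T^\beta$ yields $\licv_a(d+1) = \icvmid_a(d) - 1/T^\beta \ge \licv_a(d) \ge 0$, with $\licv_a(d+1) \le \icvmid_a(d) \le \uicv_a(d+1)$. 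Either way the invariant is restored, completing the induction.

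The only delicate point — and the one I would flag as the crux — is the interplay between the $\pm 1/T^\beta$ slack and the box constraint: a naive induction on $0 \le \licv_a(d) \le \uicv_a(d) \le 1$ alone fails, since the update $\uicv \leftarrow \icvmid + 1/T^\beta$ could push the upper endpoint past $1$ (and symmetrically $\licv$ below $0$) once the interval has contracted to a width comparable to $1/T^\beta$. Strengthening the inductive hypothesis with the self-sustaining width lower bound $w_d \ge 2/T^\beta$ is exactly what resolves this, because it forces the half-width $w_d/2$ to dominate the slack $1/T^\beta$. The mild regime restriction $T^\beta \ge 2$ needed for the base case of the width bound is harmless in the asymptotic analysis where this lemma is invoked.
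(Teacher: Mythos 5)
Your proof is correct, and it is in fact more careful than the paper's own argument. The paper proves the lemma by the same induction on $d$, but uses only the box constraint $0 \le \licv_a(d) \le \icvmid_a(d) \le \uicv_a(d) \le 1$ as the inductive hypothesis: after listing the two update branches it simply asserts that the constraint is preserved. As you observe, that assertion does not follow from the hypothesis alone — in the branch $\uicv_a(d+1) = \icvmid_a(d+1) + 1/T^\beta$ one only gets $\uicv_a(d+1) \le 1 + 1/T^\beta$ without further information, and symmetrically the other branch could a priori push $\licv$ below $0$. Your strengthened invariant $w_d = \uicv_a(d) - \licv_a(d) \ge 2/T^\beta$, self-sustaining under the common width recursion $w_{d+1} = w_d/2 + 1/T^\beta$ (and consistent with the exact formula $u_D = 2^{-D}(1 - 2/T^\beta) + 2/T^\beta$ that the paper derives later in \Cref{lemma:precision_incentives}), is exactly what is needed to get $\uicv_a(d+1) \le \uicv_a(d)$ and $\licv_a(d+1) \ge \licv_a(d)$, hence preservation of the box. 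The only price is the mild requirement $T^\beta \ge 2$, which you correctly flag as harmless. In short: same inductive skeleton as the paper, but your version supplies the monotonicity argument that the paper's proof glosses over.
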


\begin{proof}[Proof of \Cref{lemma:bounded_incentives}]
    The proof proceeds by induction. Considering some action $a\in \cA$, we have for the initialisation before any binary search is run: $\licv_a(0) = 0, \uicv_a(0)=1$ and therefore $\icvmid_a(0)\in [\licv_a(0), \uicv_a(0)]$. We now consider that a number $d$ of binary search batches has been run on $a$. Suppose that we run an additional binary search batch on action $a$. We have
    \begin{equation}
    \label{lemBIbound}
    \icvmid_a(d+1) = \frac{\uicv_a(d)+\licv_a(d)}{2} \text{ which gives } \icvmid_a(d+1) \in [\licv_a(d), \uicv_a(d)] \eqsp.
    \end{equation}
    After this iteration of binary search, we either have $\uicv_a(d+1) = \icvmid_a(d+1)+1/T^\beta$ and $\licv_a(d+1)=\licv_a(d)$ or $\uicv_a(d+1)=\icvmid_a(d)$ and $\licv_a(d+1)=\icvmid_a(d+1)-1/T^\beta$. Therefore, we still have $0\leq \licv_a(d+1)\leq \icvmid_a(d+1) \leq \uicv_a(d+1)\leq 1$, hence the result for any $d \in \lceil \log_2 T^\beta \rceil$ by induction.
\end{proof}

\begin{lemma}\label{lemma:probability_control}
    Consider some arm $a \in \cA$ and suppose that we have run $D \in \N^\star$ batches of binary search of length $\Texp$ on $a$. Then, we have that
    \begin{align*}
    \P\parenthese{\bigcap_{d \in [D]} \{\epragregret(\iint{k_{a,d}+1}{k_{a,d}+\Texp},\aalg, \Alg)\leq \acst \Texp^\kappa \}} \geq 1 - D/T^{\alpha \zeta} \eqsp.
    \end{align*}
\end{lemma}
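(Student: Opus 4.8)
The plan is to decompose the intersection over the $D$ individual binary-search batches and apply \Cref{assumption:agent_regret_bound_property} to each batch in isolation, then close by a union bound. First I would check that every batch fits \emph{exactly} the incentive structure demanded by \Cref{assumption:agent_regret_bound_property}. On the $d$-th batch run on arm $a$, i.e. over the window $\iint{k_{a,d}+1}{k_{a,d}+\Texp}$, \Cref{algorithm:binary_search} proposes the constant transfer $\icvmid_a(d)$ on arm $a$ and nothing on the remaining arms. Writing $\tau_a = \icvmid_a(d)$ and $\tau_{a'}=0$ for $a'\ne a$, this is precisely the constant-per-arm profile $\{\tau_{a'}\}_{a'\in[K]}$ with the incentivized action $\ai_l = a$ held fixed throughout the window, which is what the assumption requires. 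Moreover, the whole first phase uses at most $K\lceil T^\alpha\rceil\lceil\log_2 T^\beta\rceil\leq T$ steps, so $k_{a,d}+\Texp\leq T$ for every batch and the assumption applies verbatim with $s=k_{a,d}$ and $t=\Texp$.

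Next I would instantiate the assumption on each batch. Setting $E_d = \{\epragregret(\iint{k_{a,d}+1}{k_{a,d}+\Texp},\aalg,\Alg)\leq \acst\Texp^\kappa\}$, \Cref{assumption:agent_regret_bound_property} gives $\P(E_d)\geq 1-\Texp^{-\zeta}$, i.e. $\P(E_d^{\comp})\leq \Texp^{-\zeta}$. Since $\Texp=\lceil T^\alpha\rceil\geq T^\alpha$ and $\zeta>0$, this yields the uniform per-batch bound $\P(E_d^{\comp})\leq \Texp^{-\zeta}\leq T^{-\alpha\zeta}$. Combining the batches through the complement of the intersection and a union bound,
\begin{align*}
\P\parenthese{\bigcap_{d\in[D]} E_d} = 1-\P\parenthese{\bigcup_{d\in[D]} E_d^{\comp}} \geq 1-\sum_{d=1}^D \P(E_d^{\comp}) \geq 1 - D/T^{\alpha\zeta}\eqsp,
\end{align*}
which is the claimed bound.

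The step I would flag as the only genuine subtlety — rather than an obstacle — is that the events $E_d$ are \emph{not} independent: the batches are played sequentially and the upstream player's internal state (e.g. its \texttt{UCB} sufficient statistics) carries over from one batch to the next. This is harmless here because the union bound requires no independence, and, crucially, \Cref{assumption:agent_regret_bound_property} is a batched guarantee valid for an arbitrary contiguous window as soon as the incentive profile is constant per arm, so it can be invoked on each window without controlling the history that precedes it. Everything else is routine bookkeeping on the batch indices and the ceiling bound $\lceil T^\alpha\rceil\geq T^\alpha$.
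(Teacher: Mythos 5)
Your proof is correct and follows essentially the same route as the paper's: invoke \Cref{assumption:agent_regret_bound_property} on each batch window with $s = k_{a,d}$, $t = \Texp$ to get a per-batch failure probability of $\Texp^{-\zeta} \leq T^{-\alpha\zeta}$, then apply a union bound over the $D$ complements. Your additional remarks — checking that each batch matches the constant-incentive profile required by the assumption, and noting that the union bound needs no independence across batches — are sound and only make explicit what the paper leaves implicit.
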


\begin{proof}[Proof of \Cref{lemma:probability_control}]
First observe that for any batch $d \in \iint{1}{D}$ of binary search run on arm $a$ during steps $\iint{k_{a,d}+1}{k_{a,d}+\Texp}$, we have by \Cref{assumption:agent_regret_bound_property} that
    \begin{align*}
        \P\left(\epragregret(\iint{k_{a,d}+1}{k_{a,d}+\Texp}, \aalg, \Alg)\geq \acst \Texp^\kappa\right) \leq 1/\Texp^{\zeta} \eqsp,
    \end{align*}
    and applying a union bound over the $D$ batches, we have that
    \begin{align*}
        & \P\parenthese{\bigcup_{d \in [D]} \{\epragregret(\iint{k_{a,d}+1}{k_{a,d}+\Texp},\aalg, \Alg)\geq \acst \Texp^\kappa \}} \\
        & \quad \leq \sum_{j=1}^{D} \P\parenthese{ \{\epragregret(\iint{k_{a,d}+1}{k_{a,d}+\Texp}, \aalg, \Alg)\geq \acst \Texp ^\kappa \}} \\
        & \quad \leq D/\Texp^{\zeta} \eqsp,
    \end{align*}
    which gives that
    \begin{align*}
        \P\parenthese{\bigcap_{d \in [D]} \{\epragregret(\iint{k_{a,d}+1}{k_{a,d}+\Texp},\aalg, \Alg)\leq \acst \Texp^\kappa \}} \geq 1-D/T^{\alpha \zeta} \eqsp,
    \end{align*}
    hence the result.
\end{proof}

\begin{restatable}{lemma}{precisionincentives}\label{lemma:precision_incentives}
Suppose that the upstream player runs a subroutine $\aalg$ satisfying \cref{assumption:agent_regret_bound_property}. Consider some action $a \in \cA$ and the $D$-th binary search batch of length $\Texp$ run on arm $a$ with $D \in \iint{1}{\lceil \log_2 T^\beta \rceil}$. Then
\begin{align*}
    \bigcap_{d \in [D]} \{\epragregret(\iint{k_{a,d}+1}{k_{a,d}+\Texp}, \aalg, \Alg)\leq \acst \Texp^\kappa\} \subseteq \{\icvstar_a \in [\licv_a(D), \uicv_a(D)] \} \eqsp,
\end{align*}
and the probability of these events is at least $1-\lceil \log_2 T^\beta \rceil/T^{\alpha \zeta}$. We also have that
\begin{align*}
    |\uicv_a(D) - \licv_a(D) | \leq 1/2^{D} +2/T^\beta \text{ holds almost surely} \eqsp.
\end{align*}
\end{restatable}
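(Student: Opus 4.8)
The plan is to establish the three assertions in turn: the probability bound, the containment $\icvstar_a \in [\licv_a(D), \uicv_a(D)]$, and the almost-sure width bound.

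The probability bound is immediate from \Cref{lemma:probability_control}. Running $D \leq \lceil \log_2 T^\beta\rceil$ batches of binary search on arm $a$, that lemma gives that $\bigcap_{d\in[D]}\{\epragregret(\iint{k_{a,d}+1}{k_{a,d}+\Texp}, \aalg, \Alg)\leq \acst \Texp^\kappa\}$ occurs with probability at least $1 - D/T^{\alpha\zeta} \geq 1 - \lceil \log_2 T^\beta\rceil/T^{\alpha\zeta}$, so no further work is needed for this part.

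For the containment, I would argue by induction on the batch index $d$, working on the good event above. For the base case, \Cref{assumption:bounded_rewards} gives $v^\agent(a')\in[0,1]$ for all $a'$, so $\icvstar_a = \max_{a'\in\cA} v^\agent(a') - v^\agent(a) \in [0,1] = [\licv_a(0),\uicv_a(0)]$. For the inductive step, I would match the two non-returning branches of \Cref{algorithm:binary_search} to the two one-sided conclusions of \Cref{lemma:information_incentive_batch}, whose hypothesis $\epragregret\leq\acst\Texp^\kappa$ holds on each batch by the good event. When the branch $\Tnot_{a,d}\leq \Texp - \acst\Texp^{\kappa+\beta/\alpha}$ fires, having failed the return test it does so only with $\Tnot_{a,d}\leq \acst\Texp^{\kappa+\beta/\alpha}<\Texp-\acst\Texp^{\kappa+\beta/\alpha}$, so \Cref{lemma:information_incentive_batch} yields $\icvstar_a < \icvmid_a(d)+1/T^\beta = \uicv_a(d+1)$ while the lower endpoint is unchanged; symmetrically, in the other branch one gets $\icvstar_a > \icvmid_a(d)-1/T^\beta = \licv_a(d+1)$ with the upper endpoint fixed. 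In both cases $\icvstar_a$ remains in the updated interval, closing the induction, and \Cref{lemma:bounded_incentives} guarantees the endpoints stay in $[0,1]$ throughout.

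For the width bound I would track $w_d = \uicv_a(d)-\licv_a(d)$. Using $\icvmid_a(d)=(\uicv_a(d)+\licv_a(d))/2$ and inspecting both update branches of \Cref{algorithm:binary_search}, each replaces a single endpoint by the midpoint shifted by $1/T^\beta$, so in either case $w_{d+1}=w_d/2 + 1/T^\beta$ holds deterministically, independently of the upstream player's random responses. With $w_0=1$, solving this affine recurrence (fixed point $2/T^\beta$) gives $w_D = 2/T^\beta + (1-2/T^\beta)/2^D \leq 1/2^D + 2/T^\beta$, the claimed bound. The step I expect to be most delicate is the inductive one for the containment: one must correctly identify which non-returning branch of \Cref{algorithm:binary_search} executes from the value of $\Tnot_{a,d}$ relative to the thresholds, and verify that the matching conclusion of \Cref{lemma:information_incentive_batch} updates precisely the endpoint that traps $\icvstar_a$. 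This hinges on the ordering $\acst\Texp^{\kappa+\beta/\alpha}<\Texp-\acst\Texp^{\kappa+\beta/\alpha}$ guaranteed by condition \eqref{equation:condition_alpha_beta}; the remaining two parts are then essentially mechanical.
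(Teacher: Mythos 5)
Your proposal is correct and follows essentially the same route as the paper's proof: \Cref{lemma:probability_control} for the probability bound, induction on the batch index using the two branches of \Cref{algorithm:binary_search} matched to the one-sided conclusions of \Cref{lemma:information_incentive_batch} for the containment, and the arithmetico-geometric recurrence $w_{d+1}=w_d/2+1/T^\beta$ with $w_0=1$ for the width bound. Your handling of the non-returning branches is, if anything, slightly more careful than the paper's about which inequality on $\Tnot_{a,d}$ actually holds when a branch fires.
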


\begin{proof}[Proof of \Cref{lemma:precision_incentives}]
Suppose that the conditions of the lemma hold and consider some arm $a$.

We show by induction on the number of binary search batches $D$ that have been run on $a$ that $\bigcap_{d \in [D]} \{\epragregret(\iint{k_{a,d}+1}{k_{a,d}+\Texp}, \aalg, \Alg)\leq \acst \Texp^\kappa\} \subseteq \{\icvstar_a \in [\licv_a(D), \uicv_a(D)] \}$. If it is true, \Cref{lemma:probability_control} completes this first part of the proof.

The initialisation holds since $\licv_a(0) = 0$, $\uicv_a(0) = 1$ and $\icvstar_a = \max_{a' \in \cA}v^\agent(a') -v^\agent(a) \in [0,1]$ with probability $1$ - since $\max_{a' \in \cA}v^\agent(a')\in [0,1]$ and $v^\agent(a) \in [0,1]$.

We suppose that the property is true for some integer $D < \lceil \log_2 T^\beta \rceil$ and that we have run one more binary search on arm $a$. We have
\begin{equation*}
\icvmid_a(D+1) = \frac{\uicv_a(D) + \licv_a(D)}{2} \eqsp,
\end{equation*}
$\icvmid_a(D+1)$ being the incentive offered to the upstream player if he chooses action $a$ during the $D$-th batch $\iint{k_{a,D}+1}{k_{a,D}+\Texp}$. After this batch, if $\Tnot_{a,D} <\Texp- \acst \Texp^{\kappa+\beta/\alpha}$, $\algU$ updates $\uicv_a(D+1) = \icvmid_a(D+1) +1/T^\beta$, $\licv_a(D+1) = \licv_a(D)$ and \Cref{lemma:information_incentive_batch} ensures that $\licv_a(D+1) < \icvstar_a< \uicv_a(D+1)$ given $\{\epragregret(\iint{k_{a,D+1}+1}{k_{a,D+1}+\Texp}, \aalg, \Alg)\leq \acst \Texp^\kappa\}$. Thus the induction holds.

Otherwise, if $\Tnot_{a,D} > \acst \Texp^{\kappa+\beta/\alpha}$, $\algU$ updates $\licv_a(D+1) = \icvmid_a(D+1)-1/T^\beta$, $\uicv_a(D+1) = \uicv_a(D)$ and \Cref{lemma:information_incentive_batch} ensures that $\licv_a(D+1) < \icvstar_a< \uicv_a(D+1)$ given $\{\epragregret(\iint{k_{a,D+1}+1}{k_{a,D+1}+\Texp}, \aalg, \Alg)\leq \acst \Texp^\kappa\}$. The induction still holds.

Consequently, we have that for any number $D$ of binary search batches run on arm $a$, 
$\icvstar_a \in [\licv_a(D), \uicv_a(D)]$ with probability $1-D/T^{\alpha \zeta}$

For the second part of the proof, we define $u(D)= \uicv_a(D) - \licv_a(D)\geq 0$ as the length of the interval containing $\icvstar_a$ with probability at least $1-D/T^{\alpha \zeta}$. We have $u(0) = 1$. Suppose that after $D$ iterations of binary search batches, the next batch of binary search $\iint{k_{a,D+1}+1}{k_{a,D+1}+\Texp}$ outputs $\Tnot_{a,D+1} < \Texp-\acst \Texp^{\kappa+\beta/\alpha}$. Then, the update of \Cref{algorithm:binary_search} gives
\begin{align*}
    u_{D+1} &= \uicv_a(D+1)-\licv_a(D+1)\\ & = \icvmid_a(D+1)+1/T^\beta - \licv_a(D) \\ & = \frac{\uicv_a(D)+\licv_a(D)}{2}-\licv_a(D) +1/T^\beta \\ & = \frac{\uicv_a(D)-\licv_a(D)}{2}+1/T^\beta \\ & =u_{D}/2+1/T^\beta \eqsp.
\end{align*}
On the other hand, if $\Tnot_{a,D+1} > \acst \Texp^{\kappa+\beta/\alpha}$, the update gives
\begin{align*}
    u_{D+1} &= \uicv_a(D+1)-\licv_a(D+1) \\ &= \uicv_a(D_t^a) - (\icvmid_a(D+1)-1/T^\beta ) \\ & = \uicv_a(D) - \frac{\uicv_a(D)+\licv_a(D)}{2} +1/T^\beta \\ &= \frac{\uicv_a(D)-\licv_a(D)}{2}+1/T^\beta \\ & =u_{D}/2+1/T^\beta \eqsp.
\end{align*}

We can see that $(u_{D})_{D \geq 0}$ is an arithmetico-geometric sequence defined by $u_{D+1} = u_{D}/2 +1/T^\beta$ with an initial term $u_0=1$. Writing $r= 1/T^\beta/(1-1/2) = 2/T^\beta$, we obtain that
\begin{align*}
    |\uicv_a(D) - \licv_a(D)| = u_{D} = 1/2^{D}(1-r)+r= 1/2^{D}(1-2/T^\beta)+2/T^\beta \leq 1/2^{D}+2/T^\beta \eqsp,
\end{align*}
for any $D \in \iint{1}{\lceil \log_2 T^\beta \rceil}$, hence the result.
\end{proof}

\begin{restatable}{lemma}{binarysearchconverges}\label{lemma:binary_search_converges}
    Suppose that the upstream player runs a policy $\aalg$ satisfying \Cref{assumption:agent_regret_bound_property}. Considering some action $a\in \cA$, we have that after the binary search batch $D = \lceil \log_2 T^\beta \rceil$: $\P(\licv_a(D) \leq \icvstar_a \leq \uicv_a(D) \leq \licv_a +3/T^\beta) \geq 1-D/T^{\alpha \zeta}$.
\end{restatable}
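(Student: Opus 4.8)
The plan is to derive the statement directly from \Cref{lemma:precision_incentives}, which already packages almost everything we need. First I would instantiate that lemma at the final batch index $D = \lceil \log_2 T^\beta \rceil$. Its first conclusion gives the event inclusion $\bigcap_{d \in [D]}\{\epragregret(\iint{k_{a,d}+1}{k_{a,d}+\Texp}, \aalg, \Alg)\leq \acst\Texp^\kappa\} \subseteq \{\icvstar_a \in [\licv_a(D),\uicv_a(D)]\}$ together with the probability lower bound $1 - \lceil \log_2 T^\beta\rceil / T^{\alpha\zeta}$ for the left-hand event. Hence $\licv_a(D) \leq \icvstar_a \leq \uicv_a(D)$ holds with probability at least $1 - D/T^{\alpha\zeta}$, which already secures the two leftmost inequalities of the target chain.

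Next I would invoke the second, almost sure, conclusion of the same lemma, namely $|\uicv_a(D) - \licv_a(D)| \leq 1/2^D + 2/T^\beta$. The only arithmetic step is to observe that the choice $D = \lceil \log_2 T^\beta\rceil$ forces $2^D \geq 2^{\log_2 T^\beta} = T^\beta$, so that $1/2^D \leq 1/T^\beta$ and therefore $\uicv_a(D) - \licv_a(D) \leq 3/T^\beta$ almost surely; equivalently $\uicv_a(D) \leq \licv_a(D) + 3/T^\beta$ with probability one. This is what upgrades the generic width bound to the claimed precision $3/T^\beta$ at the terminal batch.

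Finally I would intersect the two events. Since the interval-width bound holds almost surely, it incurs no additional probability loss, and combining the containment $\licv_a(D)\leq \icvstar_a\leq\uicv_a(D)$ with $\uicv_a(D)\leq \licv_a(D)+3/T^\beta$ yields the full chain $\licv_a(D)\leq \icvstar_a\leq \uicv_a(D)\leq \licv_a(D)+3/T^\beta$ on an event of probability at least $1-D/T^{\alpha\zeta}$. Interpreting the displayed $\licv_a$ in the statement as $\licv_a(D)$, this is exactly the claim.

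There is no genuinely hard step here: the result is essentially a bookkeeping corollary of \Cref{lemma:precision_incentives}. The one point deserving care is that the width estimate is almost sure whereas the containment is only high probability, so one must not double-count the failure probability when taking the intersection; it is precisely the ceiling in $D=\lceil \log_2 T^\beta\rceil$ that collapses the geometric term $1/2^D$ below the target precision $1/T^\beta$ and makes the two bounds compose cleanly.
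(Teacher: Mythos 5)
Your proposal is correct and follows essentially the same route as the paper: both derive the result by combining the high-probability containment $\icvstar_a \in [\licv_a(D), \uicv_a(D)]$ from \Cref{lemma:precision_incentives} with its almost-sure width bound $1/2^D + 2/T^\beta$, and then use $2^{\lceil \log_2 T^\beta\rceil} \geq T^\beta$ to conclude the width is at most $3/T^\beta$. Your explicit remark that the almost-sure width bound costs no extra probability is a minor clarification the paper leaves implicit, but the argument is the same.
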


\begin{proof}[Proof of \Cref{lemma:binary_search_converges}]
    We suppose that the event $\bigcap_{d \in [\lceil \log_2 T^\beta \rceil]} \{\epragregret(\iint{k_{a,d}+1}{k_{a,d}+\Texp}, \aalg, \Alg)\leq \acst \Texp^\kappa\}$ holds. \Cref{lemma:probability_control} ensures that this event holds with probability at least $1-\lceil \log_2 T^\beta \rceil /T^{\alpha \zeta}$.
    
    After $D = \lceil \beta \log_2 T\rceil$ batches of binary search on arm $a$, we have by \Cref{lemma:precision_incentives} that
    \begin{align*}
        |\uicv_a(D)-\licv_a(D)|\leq 1/2^{D}+2/T^\beta \leq 1/2^{\beta \log_2 T}+2/T^\beta = 3/T^\beta \eqsp.
    \end{align*}    
    \Cref{lemma:precision_incentives} guarantees that $\icvstar_a \in [\licv_a(D); \uicv_a(D)]$ with probability at least $1-D/T^{\alpha \zeta}$, and we obtain $\licv_a(D) \leq \icvstar_a \leq \uicv_a(D) \leq \licv_a(D) +3/T^\beta$ with the same probability.
\end{proof}

\binarysearch*

\begin{proof}[Proof of \Cref{corollary:binary_search_simple_corollary}]
    We consider a number of binary search batches $D = \lceil \log T^\beta \rceil$ and we define the event $\cG$ as $\cG = \left\{ \text{for any }a \in \cA, \licv_a(D) \leq \icvstar_a \leq \uicv_a(D) \leq \licv_a(D) +3/T^\beta \right\}$. We have that
    \begin{align*}
        \P(\cG) &= \P\left(\bigcap_{a \in \cA} \left\{ \, \licv_a(D) \leq \icvstar_a \leq \uicv_a(D) \leq \licv_a +3/T^\beta\right\} \right) \\
        & = 1 - \P\left(\bigcup_{a \in \cA} \left\{ \, \licv_a(D) \leq \icvstar_a \leq \uicv_a(t) \leq \licv_a(D) +3/T^\beta\right\}^\mrcc \right) \\
        & \geq 1 - \sum_{a \in \cA} \P\left(\left\{\licv_a(D) \leq \icvstar_a \leq \uicv_a(D) \leq \licv_a(D) +3/T^\beta\right\}^\mrcc\right) \eqsp,
    \end{align*}
    where the last inequality holds with an union bound. \Cref{lemma:binary_search_converges} with $D = \lceil \log_2 T^\beta \rceil$ ensures that we have
    $$
    \P\left(\left\{\licv_a(D) \leq \icvstar_a \leq \uicv_a(D) \leq \licv_a(D) +3/T^\beta\right\}^\mrcc\right) \leq D/T^{\alpha \zeta} \eqsp,
    $$
    and since $\text{Card}\{\cA\} = K$, we obtain
    \begin{align*}
        \P(\cG) \geq 1 - K\lceil \log_2 T^\beta \rceil/T^{\alpha \zeta} \eqsp.
    \end{align*}
    Since the estimated incentives are defined as $\hicv_a = \uicv_a(\lceil \log_2 T^\beta \rceil) +1/T^\beta + \acst T^{(\kappa-1)/2}$, we can conclude
    \begin{align*}
        \P(\, \text{for any } a \in \cA, \hicv_a -4/T^\beta - \acst T^{(\kappa-1)/2} \leq \icvstar_a \leq \hicv_a) \geq 1- K \lceil \log_2 T^\beta \rceil /T^{\alpha \zeta} \eqsp,
    \end{align*}
    since whenever $\licv_a(\lceil \log_2 T^\beta \rceil) \leq \icvstar_a \leq \uicv_a(\lceil \log_2 T^\beta \rceil) \leq \licv_a(\lceil \log_2 T^\beta \rceil) +1/T^\beta$, we also have by definition: $\hicv_a(\lceil \log_2 T^\beta \rceil)-4/T^\beta - \acst T^{(\kappa-1)/2} \leq \licv_a(\lceil \log_2 T^\beta \rceil)\leq \icvstar_a \leq \hicv_a(\lceil \log_2 T^\beta \rceil)$.
\end{proof}

\regretbound*

\begin{proof}[Proof of \Cref{theorem:regret_bound}]
Suppose that the conditions of \Cref{theorem:regret_bound} are satisfied. By definition, $\stepexp+1 \in [T]$ is the step at which starts the run of the subroutine $\bandalg$, since $\stepexp = K\lceil T^\alpha\rceil \lceil \beta \log T\rceil$. All the binary searche batches have length $\Texp=\lceil T^\alpha \rceil$. For any $a \in \cA, d \in \iint{1}{\lceil \log_2 T^\beta \rceil}$, we define the event
\begin{equation*}
    \msb_{a,d} = \left\{ \epragregret(\iint{k_{a,d}+1}{k_{a,d}+\Texp}, \aalg, \Alg) \leq \acst \Texp^{\kappa} \right\} \eqsp,
\end{equation*}
as well as
\begin{align*}
    \evgood = \bigcap_{\underset{d \in[\lceil \beta \log T\rceil]}{a \in [K]}} \msb_{a,d} \bigcap \left\{\epragregret(\iint{\stepexp+1}{T}, \aalg) \leq \acst (T-\stepexp)^{\kappa}\right\} \eqsp,
\end{align*}
and by \Cref{assumption:agent_regret_bound_property} and \Cref{lemma:probability_control}, with an union bound, we have that
\begin{align*}
    &\P(\evgood) = 1 - \P\parenthese{\bigcup_{\underset{d = \in [\lceil \log T^\beta\rceil]}{a \in [K]}} \msb_{a,d}^\mrcc \bigcup \left\{\epragregret(\iint{\stepexp+1}{T}, \aalg, \Alg) \leq \acst (T-\stepexp)^{\kappa}\right\}} \\
    & \geq 1 - \sum_{\underset{d \in [\lceil \log T^\beta\rceil]}{a \in [K]}}\P(\msb_{a,d}^\mrcc) + \P(\epragregret(\iint{\stepexp+1}{T}, \aalg) \leq \acst (T-\stepexp)^{\kappa}) \\
    & \geq 1 - K \lceil \beta \log T\rceil \Texp^{-\zeta} - (T-\stepexp)^{-\zeta} \\
    & \geq 1 - K\lceil \beta \log T\rceil T^{-\alpha \zeta} - T^{-\zeta} \eqsp,
\end{align*}
and we now decompose
\begin{align}\label{equation:decompositon_cE_not}
\prprregret(T, \aalg, \Alg) = \E\parentheseDeux{\1(\evgood)\eprprregret(T, \aalg, \Alg)} + \parentheseDeux{\1(\evbad)\eprprregret(T, \aalg, \Alg)} \eqsp,
\end{align}
where $\eprprregret$ is defined as
\begin{equation*}
    \eprprregret(T, \aalg, \Alg) = T \mu^{\star, \principal} - \sum_{t =1}^T (v^\principal(\aag_t, \ap_t) - \1_{\ai_t}(\aag_t)\icv(t)) \eqsp.
\end{equation*}
By definition, $\mu^{\star, \principal}\geq \min_{a,b \in \cA\times \cA}v^\principal(a,b) \geq \underline{v}$ and since \Cref{lemma:objective_maximizing_social_welfare} allows to write $\mu^{\star, \principal} = \max_{a, b\in \cA\times \cA} \{v^\principal(a,b)+ v^\agent(a)\} - \max_{a' \in \cA}\{v^\agent(a')\}$, we have that
\begin{equation*}
    \underline{v} \leq \mu^{\star, \principal} \leq 1 + \Bar{v} \eqsp,
\end{equation*}
and note that since $\kappa < 1$, for any $a \in \cA, \hicv_a \leq 2+\acst T^{(\kappa-1)/2} \leq 2+\acst$. Consequently, $T \underline{v} \leq \eprprregret(T, \aalg, \Alg) \leq (3+\acst + \Bar{v} -\underline{v})T$ almost surely. Therefore
\begin{align}\label{equation:bound_fsr}
    \E\parentheseDeux{\1(\evbad)\eprprregret(T, \aalg, \Alg)} \leq (3+\acst + \Bar{v} -\underline{v}) (K \lceil \log T^\beta \rceil T^{1-\alpha \zeta} + T^{1-\zeta}) \eqsp.
\end{align}

 We consider the second term in  \eqref{equation:decompositon_cE_not}. We decompose it between the steps of binary search $\iint{1}{K \lceil T^\alpha \rceil \lceil \log_2 T^\beta \rceil}$ during which we run the \texttt{Binary Search Subroutine} and the following ones when we run $\bandalg$, which gives
\begin{align}
\label{equation:fqgllqkzf}
    &\E\parentheseDeux{\1(\evgood) \eprprregret(T, \aalg, \Alg)} \\
    \nonumber
    & \quad \leq \E\parentheseDeux{\1(\evgood) \underbrace{\sum_{t=1}^{K\lceil T^\alpha \rceil \lceil \log_2 T^\beta \rceil} \mu^{\star, \principal} - (v^\principal(\aag_t, \ap_t) - \1_{\ai_t}(\aag_t)\icv(t)) }_{\termA}} \\
    \nonumber
    & \quad + \E\parentheseDeux{\1(\evgood)\underbrace{\sum_{t=K\lceil T^\alpha \rceil \lceil \log_2 T^\beta \rceil+1}^T \mu^{\star, \principal} - (v^\principal(\aag_t, \ap_t) - \1_{\ai_t}(\aag_t)\icv(t)) }_{\termB}} \eqsp.
\end{align}
Similarly to \eqref{equation:bound_fsr}, we use the bound on $\mu^{\star, \principal}$ to bound $\termA$, which gives
\begin{align}\label{equation:qqfqljgn}
\E\parentheseDeux{\1(\evgood)\termA} & \leq \E \parentheseDeux{\1(\evgood)\sum_{t=1}^{\stepexp} 1+\Bar{v} - (\underline{v}-2-\acst)} \leq K \lceil T^\alpha \rceil(1 + \log_2 T)(3+\acst + \Bar{v} -\underline{v}) \eqsp.
\end{align}

After the binary search, at each step $t\in \iint{\stepexp+1}{T}$, $\bandalg$ recommends $(\ai_t, \ap_t) \in \cA \times \cA$ following \eqref{equation:feed_bandalg} and $\algU$ offers an incentive $(\ai_t, \hicv_{\ai_t})$ with $\hicv_a = \uicv_a(\lceil \log_2 T^\beta\rceil)+1/T^\beta + \acst T^{(\kappa-1)/2}$.

\Cref{lemma:precision_incentives} ensures that $\cE \subseteq \{\icvstar_a \in [\licv_a(\lceil \log_2 T^\beta\rceil),\uicv_a(\lceil \log_2 T^\beta\rceil)] \text{ for any } a \in \cA\}\bigcap\{\epragregret(\iint{\stepexp+1}{T}, \aalg) \leq \acst (T-\stepexp)^{\kappa}\}$. Therefore, if $\cE$ holds, for any $a \in \cA$, we have that
\begin{align*}
    v^\agent(a) + \hicv_a & = v^\agent(a) +\uicv_a +1/T^\beta + \acst T^{(\kappa-1)/2} \\
    & > v^\agent(a) + \icvstar_a + \acst T^{(\kappa-1)/2}  \\
    & = v^\agent(a) + \max_{a'' \in \cA} v^\agent(a'') - v^\agent(a) +  \acst T^{(\kappa-1)/2} \eqsp,
\end{align*}
and therefore, for any $a' \in \cA$ such that $a' \ne a$, we have that
\begin{align}\label{equation:reward_gap_dfdf}
    v^\agent(a) + \hicv_a & > v^\agent(a')  + \acst T^{(\kappa-1)/2} \eqsp.
\end{align}
This shows that at any steps $t \in \iint{\stepexp+1}{T}$ after the binary search, we have on the event $\cE$ that
\begin{equation}
\label{eq:conseq_on_a_t}
    \tilde{a}_t = \argmax_{a' \in \cA}\{v^\agent(a') + \1_{\ai_t}(a')\hicv_{\ai_t}\} \eqsp,
\end{equation}
and the reward gap at step $t$ for any $a \ne \ai_t$ is defined as
\begin{equation}
\label{eq:def_reward_proof_final}
    \max_{a'\in \cA}\{v^\agent(a')+\1_{\ai_t}(a')\hicv_{\ai_t} \} - (v^\agent(a)+\1_{\ai_t}(a)\hicv_{\ai_t})= v^\agent(\ai_t)+ \hicv_a-v^\agent(a) \eqsp.
\end{equation}
Following \eqref{equation:reward_gap_dfdf}, the reward gap from \eqref{eq:def_reward_proof_final} satisfies
\begin{equation*}
    \max_{a'\in \cA}\{v^\agent(a')+\1_{\ai_t}(a')\hicv_{\ai_t} \} - (v^\agent(a)+\1_{\ai_t}(a)\hicv_{\ai_t}) \geq \acst T^{(\kappa-1)/2} \eqsp.
\end{equation*}

We now define two sets
\begin{align*}
&\interv_T = \{t \in \iint{K \lceil T^\alpha \rceil \lceil \log_2 T\rceil+1}{T} \text{ such that } \ai_t = \aag_t\} \eqsp, \\
& \intervbad_T = \{t \in \iint{K \lceil T^\alpha \rceil \lceil \log_2 T\rceil+1}{T} \text{ such that } \ai_t \ne \aag_t\} \eqsp,
\end{align*}
which satisfy $\interv_T \cup \intervbad_T = \iint{K \lceil T^\alpha \rceil \lceil \log_2 T\rceil+1}{T}$ almost surely. 
As shown in \eqref{eq:conseq_on_a_t}, $\interv_T$ corresponds to all the steps during which the upstream player picked the best arm and for any $t \in \interv_T$
\begin{align*}
    v^\agent(\aag_t) + \1_{\ai_t}(\aag_t)\icv(t) \geq \max_{a \in \cA} \{v^\agent(a) + \1_{\ai_t}(a)\icv(t)\} \eqsp,
\end{align*}
while by \eqref{eq:def_reward_proof_final}, for any $t \in \intervbad_T$, we have that
\begin{align}\label{equation:theorem_3_dfjd}
     \max_{a \in \cA} \{v^\agent(a) + \1_{\ai_l}(a)\hicv_{\ai_l}\} - (v^\agent(\aag_l) + \1_{\ai_l}(\aag_l)\hicv_{\ai_l}) \geq \acst T^{(\kappa-1)/2} \eqsp.
\end{align}
\Cref{assumption:agent_regret_bound_property} ensures that if $\evgood$ holds, then $\pragregret(\iint{\stepexp+1}{T},\aalg, \algU) \leq \acst \, T^\kappa$, and this condition together with \eqref{equation:theorem_3_dfjd} gives that
\begin{equation*}
    \text{Card}\{\intervbad_T\} \, \acst \, T^{(\kappa-1)/2} \leq \pragregret(\iint{\stepexp+1}{T},\aalg, \algU) \leq\acst T^{\kappa} \eqsp,
\end{equation*}
and consequently $\text{Card}\{\intervbad_T\} \leq T^{(\kappa+1)/2}$. We now bound $\termB$ as follows
\begin{align*}
\E\parentheseDeux{\1(\evgood)\termB} &= \E\parentheseDeux{\1(\evgood)\sum_{t=\stepexp+1}^{T} \mu^{\star, \principal} -\left(v^\principal(\aag_t, \ap_t)-\hicv_{\ai_t}\right)} \\
& = \E\parentheseDeux{\1(\evgood)\sum_{t\in \interv_T} \max_{a,b \in \cA \times \cA} \{v^\principal(a,b) -  \icvstar_{a}\} -\left(v^{\principal}(\ai_t, \ap_t)-\hicv_{\ai_t}\right)} \\
& \quad + \E\parentheseDeux{\1(\evgood)\sum_{t \in \intervbad_T} \underbrace{\mu^{\star, \principal} -\left(v^\principal(\aag_t, \ap_t)-\hicv_{\ai_t}\right)}_{\leq 3+\acst + \Bar{v} -\underline{v}}} \\
& \leq \E\parentheseDeux{\1(\evgood)\sum_{t \in \interv_T} \max_{a,b \in \cA \times \cA}\left\{ v^\principal(a,b) - \hicv_a -(v^\principal(\ai_t, \ap_t)-\hicv_{\ai_t})\right\} + \max_{a'\in \cA}\left\{\hicv_{a'}- \icvstar_{a'}\right\}} \\
& \quad + (3+\acst + \Bar{v} -\underline{v}) \E\parentheseDeux{\1(\evgood)\text{Card}\{\intervbad_T\}} \\
& = \E\parentheseDeux{\1(\evgood)\sum_{t \in \interv_T} \max_{a,b \in \cA \times \cA}\left\{ v^\principal(a,b) - \hicv_a\right\} -(v^\principal(\ai_t, \ap_t)-\hicv_{\ai_t})} \\
& \quad + \E\parentheseDeux{\1(\evgood)\sum_{t \in \interv_T} \max_{a'\in \cA}\left\{\hicv_{a'} - \icvstar_{a'}\right\}} + (3+\acst + \Bar{v} -\underline{v}) \,T^{(\kappa+1)/2} \\
& \leq \regret_{\bandalg}(\text{Card}\{\interv_T\},\nu, \{\hicv_a\}_{a \in \cA})
 + \E\parentheseDeux{\1(\evgood)\text{Card}\{\interv_T\}\max_{a' \in \cA} \{\hicv_{a'} - \icvstar_{a'}\}} \\
 & \quad + (3+\acst + \Bar{v} -\underline{v}) \,T^{(\kappa+1)/2} \eqsp,
 \end{align*}
where the first step holds by \Cref{lemma:objective_maximizing_social_welfare}. Using \Cref{lemma:precision_incentives}, as well as the definition of $\cE$, we obtain
\begin{align*}
    \E\parentheseDeux{\1(\evgood)\text{Card}\{\interv_T\}\max_{a' \in \cA} \{\hicv_{a'} - \icvstar_{a'}\}} &\leq \E\parentheseDeux{\1(\cE) \text{Card }\{\evgood\} (4/T^\beta + \acst T^{(\kappa-1)/2})} \\
    & \leq \E\parentheseDeux{\1(\evgood)T}(4/T^\beta + \acst T^{(\kappa-1)/2}) \\
    & \leq 4T^{1-\beta}+\acst T^{(\kappa+1)/2}\eqsp,
\end{align*}
which finally gives
\begin{align}\label{equation:qsghrsgjhk}
\E\parentheseDeux{\1(\cE)\termB}& \leq \regret_{\bandalg}\left(T, \nu, \{\hicv_a\}_{a \in \cA}\right) + 4 T^{1-\beta} + (3+2 \acst + \Bar{v} -\underline{v}) \, T^{(1+\kappa)/2}  \eqsp,
\end{align}
and plugging together \eqref{equation:qqfqljgn} and \eqref{equation:qsghrsgjhk} in the decomposition \eqref{equation:fqgllqkzf} gives the following bound
\begin{align*}
\E\parentheseDeux{\1(\evgood)\eprprregret(T,\aalg, \algU))} & \leq \regret_{\bandalg}(T, \nu, \{\hicv_a\}_{a \in \cA}) + 4T^{1-\beta} + (3+2\acst + \Bar{v} -\underline{v}) \, T^{(1+\kappa)/2} \\
    & \quad + (3+\acst + \Bar{v} -\underline{v})K\lceil T^{\alpha} \rceil (1+\log_2 T)\eqsp,
\end{align*}
and summing the bounds on the events $\evgood$ and $\evbad$ finally gives
\begin{align*}
    \prprregret(T, \aalg, \algU) & \leq \regret_{\bandalg}(T, \nu, \{\hicv_a\}_{a \in \cA}) + 4T^{1-\beta} + (3+2\acst+ \Bar{v} -\underline{v}) \, T^{(1+\kappa)/2} \\
    & \quad + (2+\acst + \Bar{v} -\underline{v})K\lceil T^{\alpha} \rceil (1+\log_2 T) \\
    & \quad + (3+\acst + \Bar{v} -\underline{v}) (K (1+ \log_2 T) T^{1-\alpha \zeta} + T^{1-\zeta}) \\
    & \leq \regret_{\bandalg}(T, \nu, \{\hicv_a\}_{a \in \cA}) + 4T^{1-\beta} + (3 + 2\acst + \Bar{v} -\underline{v}) \, T^{(1+\kappa)/2} \\
    & \quad + (3+\Bar{C} + \Bar{v} -\underline{v})((1+\log_2 T)(\lceil T^\alpha\rceil +T^{1-\alpha \zeta})+T^{1-\zeta}) \\
    & \leq (3 + 2\acst + \Bar{v} -\underline{v})(T^{1-\zeta} + T^{(\kappa+1)/2} +(1+\log_2 T)(\lceil T^\alpha\rceil +T^{1-\alpha \zeta})) \\ & \quad + \regret_{\bandalg}(T, \nu, \{\hicv_a\}_{a \in \cA}) + 4T^{1-\beta} \\
    & \leq 2 (3 + 2\acst + \Bar{v} -\underline{v}) \log_2(T) (2 T^{1-\alpha \zeta} +T^{(\kappa+1)/2}+\lceil T^\alpha \rceil) + 4T^{1-\beta} \\
    & \quad + \regret_{\bandalg}(T, \nu, \{\hicv_a\}_{a \in \cA})  \eqsp.
\end{align*}
\end{proof}

\regretboundtuned*

\begin{proof}[Proof of \Cref{corollary:regret_bound_tuned}]
    First note that $\aalg=\Cref{algorithm:ucb}$ satisfies \Cref{assumption:agent_regret_bound_property} with constants $\kappa = 1/2$, $\zeta = 2$, $\acst = 8\sqrt{K\log(KT^3)}$, following \Cref{proposition:bound_regret_ucb}. Note that $\beta/\alpha = 1/3 < 1/2 = 1-\kappa$, therefore \Cref{equation:condition_alpha_beta} is satisfied. Plugging these terms in the bound from \Cref{theorem:regret_bound} with $\alpha = 3/4, \beta = 1/4$ gives
    \begin{align*}
        \prprregret(T, \texttt{UCB}, \algU) &\leq 2(3+16\sqrt{K\log_2(KT^3)}+\Bar{v}-\underline{v})\log_2 (T) (2T^{1-3/2}+T^{3/4}+\lceil T^{3/4}\rceil) \\
        & \quad + 4T^{1/4} + 8\sqrt{K^2 \log_2 (T)}\; T^{1/2} + 3K^2(\Bar{v}-\underline{v}) 
    \end{align*}
    where we use the bound for $\regret_{\bandalg}(T, \nu, \{\hicv_a\}_{a \in \cA})$ with $\bandalg =\mathtt{UCB}$ run on any bandit instance with $K^2$ arms, $1$-subgaussian rewards and reward gaps of at most $\max_{a,b \in \cA\times \cA} v^\principal(a,b) - \min_{a,b \in \cA\times \cA} v^\principal(a,b) = \Bar{v}-\underline{v}$, following \citep[][Theorem 7.2]{lattimore2020bandit}. Therefore, we have that
    \begin{align*}
        \prprregret(T, \texttt{UCB}, \algU) &\leq (6+32\sqrt{K \log_2(KT^3)} + \Bar{v}-\underline{v})\log_2(T) ( 2+1+2T^{3/4})\\
        & \quad +4T^{1/4}+3K^2(\Bar{v}-\underline{v})+8K \sqrt{\log_2 T}T^{1/2} \\
        &\leq (10+32\sqrt{K \log_2(KT^3)} + \Bar{v}-\underline{v})\log_2(T) ( 3+2T^{3/4})\\
        & \quad +3K^2(\Bar{v}-\underline{v})+8K \sqrt{\log_2 (T)} \, T^{1/2}\eqsp,
    \end{align*} 
    which finally gives
    \begin{align*}
        \prprregret(T, \texttt{UCB}, \algU) &\leq (10 + 4K +32\sqrt{K \log_2(KT^3)} + \Bar{v}-\underline{v})\log_2(T) ( 3+2T^{3/4})\\
        & \quad + 3K^2(\Bar{v}-\underline{v}) \eqsp,
    \end{align*}
    hence the result.
\end{proof}

\end{document}